\documentclass[aps,pra,twocolumn,superscriptaddress,10pt]{revtex4-1}
\usepackage{braket}
\usepackage{bbm}
\usepackage{amsmath}
\usepackage{amssymb}
\usepackage{amsthm}
\usepackage{graphicx} 
\usepackage{xcolor}
\usepackage{mathtools}
\usepackage{dutchcal}
\usepackage[colorlinks=true,citecolor=blue,linkcolor=blue,urlcolor=blue]{hyperref}

\theoremstyle{plain}
\newtheorem{theorem}{Theorem}

\newtheorem{lemma}[theorem]{Lemma}
\newtheorem{corollary}[theorem]{Corollary}

\theoremstyle{definition}
\newtheorem{definition}{Definition}

\theoremstyle{remark}

\newcommand{\aqa}{$\langle aQa ^L\rangle $ Applied Quantum Algorithms, Universiteit Leiden}
\newcommand{\lorentz}{Instituut-Lorentz, Universiteit Leiden, Niels Bohrweg 2, 2333 CA Leiden, Netherlands}
\newcommand{\liacs}{LIACS, Universiteit Leiden, Niels Bohrweg 1, 2333 CA Leiden, Netherlands}

\def\selectlanguage#1{}

\begin{document}

\title{An exact semidefinite characterization of the stoquasticity cone of permutationally invariant Bell operators}

\author{Jan Li}
\affiliation{\aqa}
\affiliation{\lorentz}

\author{Owidiusz Makuta}
\affiliation{\aqa}
\affiliation{\lorentz}

\author{Evert van Nieuwenburg}
\affiliation{\aqa}
\affiliation{\lorentz}
\affiliation{\liacs}

\author{Jordi Tura}
\affiliation{\aqa}
\affiliation{\lorentz}

\date{\today}

\begin{abstract}
Bell correlations have been detected in atomic ensembles ranging from hundreds to hundreds of thousands of particles, witnessed by the permutationally invariant (PI) Bell operators. For all the operators realized so far, the maximally violating state can be brought, in a suitable basis, into a nonnegative-amplitude form. In that basis, the Bell operator is stoquastic. The coefficients parametrising these operators that at the same time lead to stoquasticity form a convex cone, the so-called stoquasticity cone. Its characterization had remained open beyond two-body operators, as for many-body operators the number of inequalities grows with the system size $n$. In this work we show that, in the thermodynamic limit, the stoquasticity cone admits an exact semidefinite characterization, valid at any operator order. The key observation is that, along each off-diagonal, the matrix elements are the values of a single univariate polynomial. Being univariate, this condition is equivalent to a linear matrix inequality, with no relaxation hierarchy. Therefore, deciding stoquasticity becomes a single semidefinite feasibility problem. As its lowest-degree instances, this description recovers the known three-hyperplane characterization for two-body operators and yields, for three-body operators, a curved boundary.
\end{abstract}

\maketitle

\section{Introduction}

The ground state of a quantum many-body Hamiltonian encodes its essential low-energy physics, from quantum phase transitions~\cite{sachdev_quantum_2011} to the solutions of hard optimization problems suitably encoded into Hamiltonians~\cite{farhi_quantum_2001, farhi_quantum_2014, lucas_ising_2014}. A remarkable special case is the class of \emph{stoquastic} Hamiltonians, whose off-diagonal matrix elements are nonpositive in a given basis. Recall that for such operators the ground state may be chosen with nonnegative amplitudes~\cite{bravyi_complexity_2008}, so that it defines a genuine probability distribution and can, in favorable cases, be sampled by quantum Monte Carlo methods free of the notorious sign problem.
Stoquasticity does not by itself guarantee an efficient simulation~\cite{hastings_obstructions_2013}, but it removes its most severe obstruction.

A more unexpected setting in which stoquasticity has recently surfaced is that of many-body Bell nonlocality. 
The Bell operators measured in correlation experiments on ensembles ranging from hundreds to hundreds of thousands of atoms~\cite{schmied_bell_2016,engelsen_bell_2017} can, through a change of basis, be brought to stoquastic form, their ground state being the state of maximal Bell violation. Reference~\cite{li_stoquastic_2026} turned this observation into a program.
For permutationally invariant (PI) Bell operators at fixed measurement settings, the coefficients that render the operator stoquastic form a convex polyhedral cone, the so-called stoquasticity cone. 
Note that Bell operators belonging to this cone attain their optimum at a nonnegative-amplitude state, that is, at a superposition of Dicke states with nonnegative coefficients, of the kind naturally prepared in these experiments. Notably, the characterization of the stoquastic cone is nontrivial.
The cone is cut out by one linear inequality for every off-diagonal matrix element, and the number of such inequalities grows with the system size $n$. 
Only for two-body operators is the cone known to collapse to a fixed, $n$-independent set of facets, whereas for three-body and higher operators its description proliferates with no evident closed form.

In this work we show that this proliferation is only apparent, and that the stoquasticity cone of a PI Bell operator of any order admits an \emph{exact} semidefinite characterization in the thermodynamic limit.
The starting point is a structural property of the matrix elements.
Along each off-diagonal, they are the values of a single polynomial in the rescaled row index, with coefficients linear in the Bell coefficients.
The entire family of stoquasticity inequalities on that off-diagonal then amounts to requiring this one univariate polynomial to be nonnegative on an interval. By the classical Markov--Luk\'acs theorem~\cite{powers_polynomials_2000}, such a requirement is equivalent to the existence of a sum-of-squares certificate, and hence to the feasibility of a linear matrix inequality. 
Since the polynomial is univariate, no increasing-degree hierarchy is needed, unlike in the multivariate Positivstellensatz. 
The stoquasticity cone is, in other words, a projected spectrahedron, or spectrahedral shadow.

The linear matrix inequality lives in the lifted space of the Bell coefficients and the sum-of-squares Gram matrices, and the cone is its projection onto the coefficients.
Deciding whether a Bell operator is stoquastic then reduces to a single semidefinite feasibility problem, and designing stoquastic Bell operators with prescribed properties to semidefinite optimization over the cone.
Its lowest-order instances, the two- and three-body cones, require respectively no lifted variable (a polyhedral cone) and a single, provably necessary, lifted variable.

This semidefinite description both unifies and extends what was previously known. 
Its lowest-degree instances reproduce the existing results.
For two-body operators it returns the three-hyperplane characterization of Ref.~\cite{li_stoquastic_2026}, and for three-body operators a curved boundary, a quadric cone which degrades to a pair of hyperplanes in the strict thermodynamic limit.
We quantify how faithfully the thermodynamic-limit description approximates a finite ensemble, the corrections vanishing as $1/n$, and we illustrate the construction on the measurement settings used in present-day Bell experiments. 

The paper is organized as follows. In Sec.~\ref{sec:prelim} we set up stoquasticity, the PI Bell operators, and the stoquasticity cone. Section~\ref{sec:offdiagstruc} establishes the polynomial structure of the off-diagonal matrix elements, which Sec.~\ref{sec:reduction} exploits to collapse the $O(n)$ constraints to a single one on each off-diagonal. Section~\ref{sec:sdp} presents our main result, the semidefinite characterization of the resulting feasibility region together with its dual moment formulation. The two- and three-body cones are then worked out explicitly as special cases, and their geometry analyzed, in Sec.~\ref{sec:two-body} and Sec.~\ref{sec:three-body}.

\section{Stoquastic Bell operators and the stoquasticity cone}\label{sec:prelim}

We begin with stoquasticity, the property that produces ground states of a many-body Hamiltonian that admit a sign-free probabilistic representation.
We then introduce the permutationally invariant (PI) Bell operators which are the operators measured in many-body Bell experiments based on collective measurements~\cite{schmied_bell_2016,engelsen_bell_2017}. 
Imposing stoquasticity on such an operator at fixed measurement settings constrains its Bell coefficients to lie in a convex cone, the so-called stoquasticity cone~\cite{li_stoquastic_2026}, whose compact characterization is the goal of this work.

\subsection{Stoquastic operators}\label{sec:stoquastic}
 
\begin{definition}[Stoquasticity]
A Hermitian operator $\mathbf{H}$ is \emph{stoquastic with respect to
a basis} $\mathcal{B} = \{\ket{i}\}$ if its off-diagonal matrix elements are real and nonpositive,
\begin{equation}
\bra{i}\mathbf{H}\ket{j} \leq 0 \qquad \forall\; i \neq j.
\end{equation}
\end{definition}
 
The physical content of this definition lies entirely in the sign of the off-diagonal elements. 
Recall that, whenever they are nonpositive, the operator $-\mathbf{H}$ has nonnegative off-diagonal entries, so that the Perron--Frobenius theorem applied to $c\,\mathbbm{I} - \mathbf{H}$, for $c$ sufficiently large, guaranteeing a ground state whose amplitudes in $\mathcal{B}$ are nonnegative,
\begin{equation}
\braket{i|\psi_0} \geq 0 \quad \forall\; i.
\end{equation}
The ground state then carries no sign structure in $\mathcal{B}$: the imaginary-time matrix elements $\bra{i}e^{-\beta\mathbf{H}}\ket{j}\geq 0$ define, after suitable normalization, a bona fide stochastic process. This is the absence of the sign problem~\cite{troyer_computational_2005}. It does not on its own guarantee an efficient classical simulation, since sign-problem-free Monte Carlo may still fail to converge rapidly~\cite{hastings_obstructions_2013}, but it removes the single most severe obstruction to it.
Additionally, stoquastic computational problems have been constructed to investigate the boundary between classical and quantum computing.~\cite{bravyi_merlin-arthur_2006,bravyi_complexity_2008,bravyi_complexity_2010,raza_complexity_2026,kempe_complexity_2006}.

We make two remarks.
First, stoquasticity is a property relative to a \emph{fixed} basis.
Any Hermitian operator is trivially stoquastic in its own eigenbasis, where it is diagonal, so the notion acquires content only once a physically distinguished basis has been singled out, for us the Dicke basis fixed by the form of the Bell operator. 
Second, one cannot hope to rotate the sign problem away for free.
Deciding whether a local change of basis, built from $n$ single-qubit orthogonal matrices, renders a given Hamiltonian stoquastic, the problem of \emph{curing the sign problem}, is NP-complete~\cite{marvian_computational_2019}. These observations show that the stoquasticity cone is not merely a basis artifact.

This symmetry-adapted notion should not be conflated with the more familiar stoquasticity in the computational basis $\{\ket{0},\ket{1}\}^{\otimes n}$.
The latter implies the former on the symmetric sector.
For $k\neq l$ the Dicke states $\ket{D^k_n}$ and $\ket{D^l_n}$ are nonnegative superpositions of computational basis states of distinct Hamming weight, so $\bra{D^k_n}\mathbf{B}\ket{D^l_n}$ is a nonnegative combination of off-diagonal computational-basis elements and is therefore nonpositive. 
The reverse implication has no reason to hold.
The symmetric block is only an $(n+1)$-dimensional corner of the $2^n$-dimensional space, and on the remaining permutation blocks, which are built on singlet-dressed states whose minus signs break the weight-sector argument, symmetry-adapted and computational sign-positivity are unrelated.

\subsection{Permutationally invariant Bell operators}\label{sec:PIBell}

The Bell operators which are relevant to large-scale experiments inherit a strong symmetry from the way they are measured. 
In a collective-measurement Bell test on $n$ atoms one does not address the particles individually but reads out total spin components, so that the resulting Bell operator is invariant under permutations of the parties.
This motivates our focus on permutationally invariant (PI) operators.
We work in the $n$-party scenario with $2$ measurement settings and binary outcomes, the same $2$ measurements being available to every party. 
We label the input of each party, which selects the measurement operator $\mathbf{M}_{x}$, by $x \in\{0,1\}$, and the corresponding outcome, an eigenvalue of $\mathbf{M}_{x}$, by $a \in \{-1,+1\}$.
By Jordan's lemma~\cite[Lemma~2]{toner_monogamy_2006} the two observables of each party may then, without loss of generality for the extremal violation, be taken to be real $2\times2$ matrices, which we parametrize as
\begin{equation}
\mathbf{M}_x =\cos(\phi_x)\, Z+ \sin(\phi_x)\, X, \qquad x\in\{0,1\},
\label{eq:QubitMeas}
\end{equation}
where $Z$ and $X$ denote the Pauli-Z and Pauli-X matrices respectively and $\varphi_x \in [0,2 \pi)$ is the measurement angle of setting $x$, which in the rest of the paper we denote by $\varphi\coloneqq \phi_{0}$, $\theta\coloneqq \phi_{1}$.

The $k$-body PI measurement operators are built by summing a product of $k$ local measurements over all tuples of distinct qubits, indexed by a setting tuple $\vec{x}=(x_1,\dots,x_k)\in\{0,1\}^k$, whose length we write $|\vec{x}|=k$ (the body order).
More specifically, they are defined as 
\begin{equation}
     \mathbf{S}_{\vec{x}} \coloneqq \sum_{\substack{i_1,\ldots,i_k=0\\ \text{all distinct}}}^{n-1}
     \bigotimes_{j=1}^{k} \mathbf{M}_{x_j}^{i_j},
     \label{eq:kPIMeasOP}
\end{equation}
where $i_j \in \{0,\ldots,n-1\}$ denotes the qubit on which $\mathbf{M}_{x_j}^{i_j}$ acts, and the sum runs over all $k$-tuples of
distinct qubits. 
\begin{definition} A PI Bell operator of order $K$ is defined as a linear combination of PI measurement operators
\begin{equation}
\begin{aligned}
\mathbf{B} &= \sum_{\vec{x}\,\in\,\mathcal{X}_K} \alpha_{\vec{x}}\,\mathbf{S}_{\vec{x}}, \\
\mathcal{X}_K &\coloneqq \bigcup_{k=1}^K \bigl\{\vec{x}\in\{0,1\}^k : x_1\le x_2\le\cdots\le x_k\bigr\},
\end{aligned}
\label{eq:BellOp}
\end{equation}
where $\vec\alpha = (\alpha_{\vec x})_{\vec x
\in \mathcal{X}_K} \in \mathbbm{R}^{|\mathcal{X}_K|}$ are the Bell coefficients.
\end{definition}
Because the qubit sum in Eq.~\eqref{eq:kPIMeasOP} is symmetric under permuting the entries of $\vec{x}$, the operator $\mathbf{S}_{\vec{x}}$ depends only on the multiset of settings it contains.

Due to the permutational symmetry, the Bell operators can be decomposed into block-diagonal form $\mathbf{B} = \oplus_J \mathbf{B}_J$, where $\mathbf{B}_J$ denotes the subblock labelled by $J \in \{J_0, J_0 + 1, \ldots, n/2\}$~\cite{moroder_permutationally_2012}. 
Here $J_0 = 0$ if $n$ is even and $J_0 = 1/2$ if $n$ is odd. 
We call the $\mathbf{B}_J$ the $J$-th block of $\mathbf{B}$.
A basis of one representative copy of the $J$-th block is given by
\begin{equation}
\label{eq:DickeBasis}
\{\ket{D^{k}_{2J}} \otimes \ket{\psi^-}^{\otimes (n-2J)/2}\}_{k=0}^{2J},
\end{equation}
where $\ket{\psi^-} \coloneqq \frac{\ket{01}-\ket{10}}{\sqrt{2}}$ and
\begin{equation}
\label{eq:defDicke}
\ket{D^k_{2J}}\coloneqq\binom{2J}{k}^{-1/2}\sum_{\substack{w\in\{0,1\}^{2J}\\ |w|=k}}\ket{w},
\end{equation}
where the sum runs over the $\binom{2J}{k}$ computational basis states of Hamming weight $|w|=k$. 
The state $\ket{D^k_{2J}}$ is called the Dicke state of $2J$ qubits with $k$ excitations~\cite{dicke_coherence_1954,dur_three_2000}.

This symmetry is what renders the problem analytically tractable.
Each block has dimension at most $n+1$ and only polynomially many distinct blocks occur, so the problem stays tractable in the many-body limit.
The same collective structure underlies paradigmatic models such as the Lipkin--Meshkov--Glick Hamiltonian~\cite{lipkin_validity_1965,meshkov_validity_1965,glick_validity_1965} and the PI Bell operators realized in many-body experiments~\cite{tura_nonlocality_2015,schmied_bell_2016,engelsen_bell_2017}.

We refer to the largest block, $J = n/2$, as the symmetric subspace \cite{marconi_symmetric_2026}, and to an operator supported on it as a symmetric operator. We work on this block throughout. For the operators from Refs.~\cite{tura_nonlocality_2015, schmied_bell_2016}, it already contains the maximally violating state~\cite{engelsen_bell_2017}. For a general PI operator it need not in general, and one can in fact engineer the Bell operator into any block of choice as long as $K\geq 4$ and a third measurement setting along the $y$ direction is available: since $\mathbf{J}^2$ acts as $J(J+1)$ on the spin-$J$ block, one can add a penalty term $\lambda\left(\mathbf{J}^2 - J^{*}(J^{*}+1)\right)^2$, with large $\lambda > 0$, effectively raising the energy of every block except for $J=J^{*}$, therefore placing the ground state in the prescribed $J^{*}$ block. In extreme cases, \textit{i.e.}, $J^{*}=n/2$ and $J^{*} = J_0$, the linear terms $\mp \mathbf{J}^2$ are sufficient. The penalty is therefore a two-body operator in $\mathbf{J}^2$, requiring $K = 4$, and the linear terms in $\mathbf{J}^2$ just require $K=2$. We remark that $\mathbf{J}^2$ contains $\mathrm{S}_{YY}$, which is not spanned by Eq.~(\ref{eq:BellOp}) for settings in the $XZ$ plane, but it becomes available with a third setting along $Y$, which does not change the analysis below.

\subsection{The stoquasticity cone}\label{sec:stoqcone}

We can now make precise the central object of this paper. Let us fix the measurement settings $(\varphi,\theta)$ and the system size $n$, and regard the Bell coefficients $\vec{\alpha}$ as free variables in their own right, rather than as functions of $n$ or of any other parameter. 
The requirement that $\mathbf{B} = \sum_{\vec{x}\,\in\,\mathcal{X}_K} \alpha_{\vec{x}}\, \mathbf{S}_{\vec{x}}$ be stoquastic then carves out a region in coefficient space, the stoquasticity cone introduced in Ref.~\cite{li_stoquastic_2026}.
The stoquasticity conditions $(\mathbf{B}_{n/2})_{k,l} \leq 0$ for all $k \neq l$ are linear in $\vec\alpha$ and homogeneous, so the feasible set is a polyhedral cone.
\begin{definition}\label{def:stoqcone}
For fixed $(\varphi,\theta)$ and $n$, the (symmetric-block) stoquasticity cone is
\begin{equation}
\mathbbm{S}^{n,K}_{\varphi,\theta} \coloneqq
\bigl\{\vec\alpha \in \mathbbm{R}^{|\mathcal{X}_K|} \;\big|\;
(\vec{\mathbf{S}}_{k,l})^T \cdot \vec\alpha \leq 0,\;
\forall\, k \neq l\bigr\},
\label{eq:StoqCone}
\end{equation}
with 
\begin{equation}
\begin{aligned}
(\vec{\mathbf{S}}_{k,l})^T = \big( &(\mathbf{S}_0)_{k,l}, (\mathbf{S}_1)_{k,l}, (\mathbf{S}_{00})_{k,l}, 
(\mathbf{S}_{01})_{k,l}, (\mathbf{S}_{11})_{k,l},\\
&...,(\mathbf{S}_{0\mathbf{1}_{K-1}})_{k,l}, (\mathbf{S}_{\mathbf{1}_{K}})_{k,l}
\big) \nonumber
\end{aligned}
\end{equation}
where ${\mathbf{1}_{K}}$ denotes the string of $K$ ones and $\vec{\alpha}=(\alpha_{\vec{x}})_{\vec{x}\in\mathcal{X}_K}\in\mathbbm{R}^{|\mathcal{X}_K|}$ is the matching vector of Bell coefficients, ordered as in $(\vec{\mathbf{S}}_{k,l})^T$.
\end{definition}
Each pair $k\neq l$ with $|k-l|\le K$ contributes one linear inequality.
Higher off-diagonals vanish identically (Theorem~\ref{thm:BellMatElPol} below), and $(k,l)$ and $(l,k)$ coincide since the block is real symmetric (App.~\ref{app:offdiagonals}), so the cone is cut out by $O(Kn)$ inequalities, a number that grows with the system size $n$. 
For two-body operators this growth turns out to be only apparent.
Ref.~\cite{li_stoquastic_2026} showed that all but three of the inequalities are redundant, leaving a description independent of $n$.
For $K\geq 3$, however, no such finite reduction has been found. 
Already for $K=3$ each row of the first off-diagonal supports the cone at a distinct boundary point, so the number of irredundant facets grows linearly with $n$. 
Beyond economy of description, the proliferation matters practically.
Optimizing the quantum-classical gap over stoquastic operators requires the dual (vertex) description of the cone, which a facet list growing with $n$ obstructs.
A compact exact description removes this obstruction. The remainder of this work is devoted to taming this growth. First, in Sec.~\ref{sec:offdiagstruc} and Sec.~\ref{sec:reduction}, by collapsing the $O(n)$ inequalities on each off-diagonal into a single condition, and then, in Sec.~\ref{sec:sdp}, by recognizing the resulting feasible set as a projected spectrahedron.

\section{Off-diagonal element structure}\label{sec:offdiagstruc}

Throughout this work we focus on the symmetric subspace, the block $J=n/2$ (equivalently $2J=n$). Crucial to our analysis is the fact that the matrix elements of a PI measurement operator along a given off-diagonal are the values of a single explicit function of the row index $k$. 
This means that the stoquasticity conditions for a given off-diagonal can effectively be characterized by a single function. 
Additionally, PI measurement operators of order at most $K$ are real symmetric in the basis of Eq.~\eqref{eq:DickeBasis} and all matrix elements beyond the $K$-th off-diagonal vanish~\cite{tura_nonlocality_2015, li_stoquastic_2026}, as Theorem~\ref{thm:BellMatElPol} below confirms, so we only need to consider $K$ such functions. 
Their form is given by the following theorem, stated for a general off-diagonal index $d\in\{1,\ldots,K\}$.
\begin{theorem}\label{thm:BellMatElPol}
Let $\mathbf{B}_{n/2}(\varphi,\theta,\vec{\alpha})$ be a symmetric $K$-body PI Bell operator. Then $(\mathbf{B}_{n/2})_{k,k+d}=0$ for $d>K$, while for $d\in\{1,\ldots,K\}$
\begin{equation}
\label{eq:offdiagfactor}
(\mathbf{B}_{n/2})_{k,k+d}=\sqrt{f_{d}(n,k)}\;h_{d}(n,k;\varphi,\theta,\vec{\alpha}),
\end{equation}
where $f_{d}$ and $h_{d}$ are polynomials in $k$ of degree $2d$ and at most $K-d$ respectively.
\end{theorem}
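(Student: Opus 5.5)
The proof reduces each PI measurement operator to a polynomial in collective spin operators and then reads off the matrix elements in the Dicke basis. First we rewrite the local observables in a form adapted to the Dicke basis. With $\sigma^\pm$ the single-qubit raising and lowering operators,
\begin{equation}
\mathbf{M}_x=\cos\phi_x\,Z+\sin\phi_x\,(\sigma^++\sigma^-).
\end{equation}
Substituting this into Eq.~\eqref{eq:kPIMeasOP} and expanding the tensor product, $\mathbf{S}_{\vec x}$ becomes a linear combination of operators
\begin{equation}
\mathbf{T}_{a,b,c}\coloneqq\sum_{\text{distinct}}(Z)^{\otimes a}(\sigma^+)^{\otimes b}(\sigma^-)^{\otimes c},\qquad a+b+c=k.
\end{equation}
The coefficients are products of $\cos\phi_{x_j}$ and $\sin\phi_{x_j}$, and hence depend only on $(\varphi,\theta,\vec x)$, not on $n$ or the row index.

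Next we evaluate each $\mathbf{T}_{a,b,c}$ on Dicke states by combinatorics. Acting on $\ket{D^k_n}$, the operator $\mathbf{T}_{a,b,c}$ changes the excitation number by $b-c$ (with the sign fixed by the convention for which of $\sigma^\pm$ raises $k$). Since $|b-c|\le k\le K$, every element with $d>K$ vanishes, which proves the first claim. For the nonzero elements we count configurations: we choose $b$ sites among the unexcited qubits to flip up and $c$ sites among the excited ones to flip down, the remaining $a$ sites are arbitrary distinct spectators, and each $Z$ contributes $\pm1$ according to the occupation of its site. Together with the Dicke normalisations $\binom{n}{k}^{-1/2}\binom{n}{k+d}^{-1/2}$, the resulting element $\bra{D^k_n}\mathbf{T}_{a,b,c}\ket{D^{k+d}_n}$ with $b-c=d$ should take the form
\begin{equation}
\sqrt{\frac{\binom{n}{k}}{\binom{n}{k+d}}}\,\binom{n-k}{d}\,\frac{d!\,(b+c)!\cdots}{\cdots}\,q_{a,c}(n,k).
\end{equation}
Here $q_{a,c}$ is a falling-factorial expression of degree $a+2c$ in $k$. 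It comes from choosing the $c$ extra down–up pairs, which gives falling factorials in $k$ and $n-k$, and from the $Z$-weighted count of spectators, which gives a polynomial of degree $a$ in $k$ (for example, $\sum Z$ on the spectators contributes $n-2k$ plus shifts).

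The main step is to factor out a common square root. We write
\begin{equation}
\frac{\binom{n}{k}}{\binom{n}{k+d}}\binom{n-k}{d}^2=\frac{(k+d)!\,(n-k)!}{k!\,(n-k-d)!}\,\frac{1}{(d!)^2}.
\end{equation}
This equals $\frac{1}{(d!)^2}\prod_{j=1}^{d}(k+j)(n-k-j+1)$, a polynomial of degree $2d$ in $k$. We define $f_d(n,k)$ as this product, so that $f_d$ depends only on $d$ and is the same for every term on the $d$-th off-diagonal. Everything remaining is polynomial in $k$, and we collect it as
\begin{equation}
h_d=\sum_{\vec x}\alpha_{\vec x}\sum_{a,b,c}(\text{trig coefficient})\,q_{a,c}(n,k).
\end{equation}

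The degree bound on $h_d$ is where I expect the real work. From $a+b+c\le K$ and $b=c+d$ we get $a+2c\le K-d$, so each term $q_{a,c}$ has degree at most $K-d$ in $k$, and hence $\deg h_d\le K-d$. The delicate part is the spectator $Z$-sum: one has to check that the distinctness constraint together with the $c$ extra pairs still produces a polynomial in $k$ of total degree $a+2c$, and not something rational. To handle this, I would write $\sum_{\text{distinct}}Z^{\otimes a}$ restricted to the unaffected sites as a polynomial in the numbers $k'$ and $n-k'$ of excited and unexcited remaining sites, using $\sum_{\text{distinct}}Z^{\otimes a}=\sum_{i}\binom{a}{i}(-1)^i(k')^{\underline{i}}(n-k')^{\underline{a-i}}$. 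The counts $k'$ and $n-k'$ are shifts of $k$ and $n-k$ by constants, so this expression is manifestly polynomial of degree $a$.
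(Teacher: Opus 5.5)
Your argument is correct, and it takes a genuinely different route from the paper.

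\textbf{The paper's route.} The paper first converts every distinct-site sum into a polynomial in the collective operators $\mathbf{S}_X,\mathbf{S}_Y,\mathbf{S}_Z$ (App.~\ref{app:sub:hightosingleprod}). It then reads $(\mathbf{B}_{n/2})_{k,k+d}$ off as a sum over walks on the Dicke ladder. The $d$ mandatory edge crossings produce $\prod_{j=k}^{k+d-1}\Gamma(j)=\sqrt{f_d}$. The at most $K-d$ leftover steps produce $h_d$: back-and-forth pairs give $\Gamma(j)^2$ and $Z$-steps give $\Xi(j)$.

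\textbf{Your route.} You keep the distinctness constraint and count computational-basis configurations directly. Made precise, your count is $\sqrt{\binom{n}{k}/\binom{n}{k+d}}\,(n-k)^{\underline{b}}\,k^{\underline{c}}\,\Sigma_a$. Here $\Sigma_a$ is the $Z$-weighted sum over ordered $a$-tuples of the $k-c$ excited and $n-k-b$ unexcited spectator sites. Three small corrections follow from this:
\begin{itemize}
\item The unspecified constant in your display is exactly $d!$, and $q_{a,c}=k^{\underline{c}}(n-k-d)^{\underline{c}}\Sigma_a$.
\item The count you call $n-k'$ should read $n-k-b$.
\item Counting the action on $\ket{D^k_n}$ computes the transpose of the element you wrote. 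This is harmless, since $\mathbf{T}_{a,b,c}^{\top}=\mathbf{T}_{a,c,b}$.
\end{itemize}
As a check, for $\mathbf{S}_{000}$ the terms $(a,b,c)=(2,1,0)$ and $(0,2,1)$ reproduce exactly the polynomials $F$ and $G$ listed in App.~\ref{app:offdiagonals}.

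\textbf{What each route buys.} Your route gives an explicit closed form for every off-diagonal element at every order. It needs no recursive reduction, no $\mathbf{S}_Y$ terms, and no $n$-dependent counterterms such as $(2-3n)\mathbf{S}_Z$. The degree bound $a+2c\le K-d$ is immediate from $a+b+c=|\vec{x}|$ and $b-c=d$. In particular, it already supplies the explicit $h_d$ for $K\ge 4$, which the paper's outlook lists as unknown.

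The paper's route needs only the standard spin matrix elements and makes the origin of the $\Gamma$-product prefactor transparent. It also carries over to any spin-$J$ block with $\Gamma,\Xi$ evaluated at $2J$. Your configuration count, by contrast, is tied to the symmetric Dicke states and would need rework on blocks containing singlet factors. That concerns generality only, since the statement is about the symmetric block.
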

The proof needs only the matrix elements of the single-body PI Pauli operators $\mathbf{S}_X = \sum_{i} X^i$, $\mathbf{S}_Y = \sum_{i} Y^i$, $\mathbf{S}_Z = \sum_{i} Z^i$ (with $i \in \{0,\ldots,n-1\}$) in the Dicke basis of Eq.~\eqref{eq:DickeBasis}~\cite{tura_nonlocality_2015}, given by
\begin{align}
\label{eq:singlePauliel}
(\mathbf{S}_X)_{k,l} &= \Gamma(l)\delta_{k,l+1} + \Gamma(k)\delta_{l,k+1}, \nonumber \\
(\mathbf{S}_Y)_{k,l} &= i\Gamma(l)\delta_{k,l+1} - i \Gamma(k)\delta_{l,k+1}, \nonumber\\
(\mathbf{S}_Z)_{k,l} &= \Xi(k) \delta_{k,l},
\end{align}
where $\Gamma(k) = \sqrt{(k+1)(n-k)}$ and $\Xi(k) = n-2k$ on the symmetric block.
On a general spin-$J$ block the same formulas hold with $n$ replaced by $2J$ (App.~\ref{app:offdiagonals}).

\begin{proof}
By Eq.~\eqref{eq:BellOp}, $\mathbf{B}$ is a linear combination of the PI measurement operators $\mathbf{S}_{\vec{x}}$, each of which decomposes as $\mathbf{S}_{\vec{x}}=\sum_{\vec{P}}c_{\vec{P}}(\varphi,\theta)\,\mathbf{S}_{\vec{P}}$ with the sum over all ordered strings of Pauli operators $X,Y,Z$, and $c_{\vec{P}}(\varphi,\theta)$ a function of $\varphi$ and $\theta$. Moreover, each $\mathbf{S}_{\vec{P}}$ is itself a sum of at most $K$-fold products $\mathbf{S}_{P_1}\cdots\mathbf{S}_{P_L}$ of single-body operators (App.~\ref{app:sub:hightosingleprod}). Every matrix element $(\mathbf{B}_{n/2})_{k,k+d}$ is therefore a linear combination of terms
\begin{equation}
\label{eq:product_of_S}
\sum_{m_1,\ldots,m_{L-1}}(\mathbf{S}_{P_1})_{k,m_1}(\mathbf{S}_{P_2})_{m_1,m_2}\cdots(\mathbf{S}_{P_L})_{m_{L-1},k+d},
\end{equation}
with $L\in\{1,\ldots,K\}$.
By Eq.~\eqref{eq:singlePauliel} each factor $\mathbf{S}_{P_{i}}$ is one step of a walk from index $k$ to $k+d$: $\mathbf{S}_X$ and $\mathbf{S}_Y$ move the index by $\pm1$, contributing $\Gamma(\min(m_{j-1},m_j))$, while $\mathbf{S}_Z$ stays put, contributing $\Xi(m_{j-1})$. An up-step from $j$ to $j+1$ and a down-step from $j+1$ to $j$ contribute the same factor $\Gamma(j)$, up to the signs and factors of $i$ carried by $\mathbf{S}_Y$, which are constants absorbed into the coefficients of $g$. To reach $k+d$ from $k$, each edge $(j,j+1)$ with $j\in\{k,\ldots,k+d-1\}$ must be crossed an odd number of times and every other edge an even number.
The single mandatory crossing of each of these $d$ edges produces the common prefactor
\begin{equation}
\prod_{j=k}^{k+d-1}\Gamma(j)=\sqrt{\,\prod_{j=k}^{k+d-1}(j+1)(n-j)\,}\eqqcolon\sqrt{f_{d}(n,k)},
\end{equation}
the square root of a polynomial $f_{d}$ of degree $2d$ in $k$. Every remaining step contributes a polynomial: an extra up--down pair on a single edge gives $\Gamma(j)^2=(j+1)(n-j)$, of degree $2$, and a step associated with $\mathbf{S}_{Z}$ gives $\Xi(j)=n-2j$, of degree $1$. The at most $K-d$ leftover steps therefore multiply the prefactor by a polynomial of degree at most $K-d$ in $k$. Therefore, from the fact that $(\mathbf{B}_{n/2})_{k,k+d}$ is a linear combination of operators $\mathbf{S}_{\vec{P}}$, we can infer Eq. \eqref{eq:offdiagfactor}.
\end{proof}
Because $\sqrt{f(n,k)}$ is a strictly positive prefactor shared by the whole $d$-th off-diagonal, the stoquasticity condition $(\mathbf{B}_{n/2})_{k,k+d}\le 0$ is equivalent to nonpositivity of $h_{d}$. For simplicity, in the rest of the paper we suppress the fixed parameters $\varphi,\theta,n$ in the notation, as the dependence of $h_{d}$ on $k$ and $\vec{\alpha}$ is the one we wish to analyze in the rest of this work.

\section{Reduction to a continuous condition}\label{sec:reduction}

The properties on the off-diagonal elements presented in Section~\ref{sec:offdiagstruc} can now be used to obtain a more compact characterization of the stoquasticity cone. 
In particular, we will show that by passing to the thermodynamic limit, we can replace the $O(n)$ conditions per off-diagonal by a small set of conditions. 
The central idea stems from the notion that as the number of hyperplanes increases, the boundary of their intersection approaches a smooth curve. Fig.~\ref{fig:threebody} illustrates this for the three-body case.
Instead of having a description in terms of the intersections, we give the characterization of the smooth surface directly, allowing for a more compact description of the boundary of the feasible set. 
This is analogous to replacing the description of the disk $\bigcap_{a^2+b^2=1}\{(u,v) : au+bv \leq r\}$ by the single nonlinear condition $\sqrt{u^2+v^2}\leq r$. 

Let us denote the set of stoquasticity conditions of the $d$-th off-diagonal as $\{h_d(k,\vec{\alpha}) \leq 0\}_k$ in $\vec{\alpha}$ parametrized by $k \in \{0,1,\ldots,n-d\}$.
In our case, we arrive at the continuous condition through solving for the value of $k$ where $h_d(k,\vec{\alpha})$ is maximized in terms of $\vec{\alpha}$ and substituting the solution back into $h_d(k,\vec{\alpha})$ to remove the $k$ dependence. 
The idea is that if $\max_k h_d(k,\vec{\alpha}) \leq 0$, then the membership condition must be satisfied for all other $k$ as well. 

However, the parameter $k \in \{0,\ldots,n-d\}$ is discrete not continuous. We solve it with the substitution $k = (n-d)x$, the index $k \in \{0,\ldots,n-d\}$ gets replaced by the bounded continuous variable $x \in [0,1]$, whose endpoints coincide with the first and last rows, and the discrete supremum $\sup_k$ becomes $\sup_{x\in[0,1]}$.
For convenience, we can divide the polynomial we obtain after this substitution by $(n-d)^{K-d}$, as the stoquasticity constraints are invariant under division by a positive real number.
Then, we get the following sufficient stoquasticity condition for the $d$-th off-diagonal
\begin{equation}
\sup_{x\in[0,1]}\,\mathcal{h}_{d}(x,\vec{\alpha})\leq 0,
\label{eq:StoqSup}
\end{equation}
where $\mathcal{h}_{d}(x,\vec{\alpha}) \coloneqq (n-d)^{-(K-d)}\,h_{d}(k,\vec{\alpha})\big|_{k=(n-d)x}$. The condition is only sufficient, as the existence of $x$ such that $\mathcal{h}_{d}(x,\vec{\alpha})>0$ does not necessarily imply $h(k,\vec{\alpha})>0$ for some $k$. However, \eqref{eq:StoqSup} is necessary and sufficient for stoquasticity in the thermodynamic limit $n\rightarrow \infty$, as $k/(n-d)$ becomes dense in $[0,1]$. Moreover, the normalization we have imposed on $\mathcal{h}_{d}$ ensures that no monomial term diverges in the thermodynamic limit, which allows us to work with the thermodynamic limit polynomial, denoted as $\mathcal{h}^{\infty}_{d}(x,\vec{\alpha})\coloneqq\lim_{n\to\infty}\mathcal{h}_{d}(x,\vec{\alpha})$.

It is worth being explicit about the scope of this limit, since the whole reduction rests on it. The Bell coefficients $\vec{\alpha}$ are held fixed as $n\to\infty$.
A $\kappa$-body measurement operator contributes to the $d$-th off-diagonal a polynomial of degree at most $\kappa-d$. After the substitution $k=(n-d)x$ and division by $(n-d)^{K-d}$ its contribution carries a factor $(n-d)^{-(K-\kappa)}$, so at fixed $\vec{\alpha}$ every $\kappa$ apart from the top one is suppressed, and the limit $\mathcal{h}^{\infty}_{d}(x,\vec{\alpha})$ keeps only the highest-order coefficients.

For the three-body operator of Sec.~\ref{sec:three-body} this is why the interior geometry is governed by the three-body coefficients alone. The situation changes for \emph{families} of operators whose coefficients scale with the system size, as arises for some many-body Bell inequalities~\cite{tura_nonlocality_2015}. If a lower-body coefficient is allowed to grow like $n^{K-\kappa}$, its contribution is no longer suppressed and re-enters the limit. The reduction still produces the nonnegativity of a univariate polynomial of degree at most $K-d$ on $[0,1]$, so the semidefinite characterization of the next section applies unchanged.
However, the limit polynomial is then a different one, and which body orders survive must be tracked through the chosen scaling.

A related subtlety lives at the endpoints $x=0,1$. 
The square-root factors $\sqrt{x(1-x)}$ that make these points delicate do not, however, enter the condition.
Instead, they belong to the common non-negative prefactor $\Gamma(k)\cdots\Gamma(k+d-1)$ that has been divided out, leaving Eq.~\eqref{eq:StoqSup} a condition on the smooth polynomial alone.
For fixed $\vec{\alpha}$ the supremum is therefore attained either in the bulk at some $x^{\ast}\in(0,1)$ or at an endpoint, and the endpoint conditions $\mathcal{h}_{d}(0,\vec{\alpha})\le0$ and $\mathcal{h}_{d}(1,\vec{\alpha})\le0$ are the lattice rows $k=0$ and $k=n-d$ themselves, i.e. they are exact at every $n$. 

The supremum of $\mathcal{h}_{d}(x,\vec{\alpha})$ is attained either at an interior stationary point $x_j^{*}(\vec{\alpha})$ satisfying
\begin{equation}
\label{eq:stationarity}
    \partial_x \mathcal{h}_{d}(x_j^{*},\vec{\alpha}) = 0, \qquad
    \partial_x^2 \mathcal{h}_{d}(x_j^{*},\vec{\alpha}) < 0,
\end{equation}
with $x_j^{*}(\vec{\alpha})\in (0,1)$, or at an endpoint $x \in \{0,1\}$. 
In the degenerate cases where $\partial_x^2 \mathcal{h}_{d}(x_j^{*},\vec{\alpha}) = 0$, higher-order derivative tests are needed to find the local maxima. 
In particular, $x^{*}(\vec{\alpha})$ is a local maximum if and only if the first non-zero derivative occurs at an even order and is negative. 
The full stoquasticity condition for the $d$-th off-diagonal is then
\begin{equation}
\label{eq:elimCondition}
    \max\bigl\{\mathcal{h}_{d}(0,\vec{\alpha}),\; \mathcal{h}_{d}(1,\vec{\alpha}),\;
    \max_j \mathcal{h}_{d}(x_j^{*}(\vec{\alpha}),\vec{\alpha})\bigr\} \leq 0,
\end{equation}
where $j$ enumerates all $x_j^{*}(\vec{\alpha})$ that are interior local maxima.
As $\mathcal{h}_{d}(x,\vec{\alpha})$ is a polynomial of degree at most $K-d$ in $x$, there are at most $\lceil(K-d-1)/2\rceil$ such solutions $x_j^{*}(\vec{\alpha})$.
We note that the maximizer $x^{*}(\vec{\alpha})\in (0,1)$ is always algebraic, being a root of the polynomial $\partial_x \mathcal{h}_{d}$, but is not in general expressible in terms of radicals. 
In particular, for polynomials of degree five or higher, no closed-form solution in radicals exists in general.
Therefore, it is not guaranteed that for Bell operators of arbitrary order $K$, such a closed-form radical expression exists. 
In Sec.~\ref{sec:sdp} we first show that the collapsed condition~\eqref{eq:StoqSup} admits an exact semidefinite formulation valid for every order $K$.
We then apply the approach to the two-body and three-body stoquasticity cones explicitly in Sec.~\ref{sec:two-body} and~\ref{sec:three-body}, which are its lowest-degree instances. 

We close this section with a structural feature of the limit polynomials.
\begin{lemma}[Parity of the limit polynomials]\label{lem:parity}
For every $d$, the limit polynomial $\mathcal{h}^{\infty}_{d}(x,\vec{\alpha})$ has definite parity under the flip $x\mapsto1-x$:
\begin{equation}
\mathcal{h}^{\infty}_{d}(1-x,\vec{\alpha})=(-1)^{K-d}\,\mathcal{h}^{\infty}_{d}(x,\vec{\alpha}).
\end{equation}
\end{lemma}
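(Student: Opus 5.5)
The plan is to exploit the spin-flip symmetry $X^{\otimes n}$ of the symmetric block, which maps $\ket{D^k_n}\mapsto\ket{D^{n-k}_n}$. Under it, $Z\mapsto -Z$, $X\mapsto X$ (and $Y\mapsto -Y$).

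First I will isolate what survives in the limit. As argued after Eq.~\eqref{eq:StoqSup}, at fixed $\vec\alpha$ only the top body order $\kappa=K$ contributes to $\mathcal{h}^\infty_d$. So it suffices to treat a single $K$-body term $\mathbf{S}_{\vec x}$, and moreover only its leading-order part in $n$. Expanding each $\mathbf{M}_{x_j}=\cos\phi_{x_j}Z+\sin\phi_{x_j}X$, the operator $\mathbf{S}_{\vec x}$ becomes a combination of PI Pauli strings with $p$ factors $Z$ and $K-p$ factors $X$. To leading order in $n$, each such string equals the product $\mathbf{S}_Z^{p}\mathbf{S}_X^{K-p}$ symmetrized, because the distinct-index constraint and the reorderings only produce terms with fewer single-body factors; these are suppressed by powers of $n$ after the rescaling. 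I will make this precise by the walk picture in the proof of Theorem~\ref{thm:BellMatElPol}: every step of a $K$-step walk contributes a factor scaling like $n$ (namely $\Gamma\sim n\sqrt{x(1-x)}$ or $\Xi\sim n(1-2x)$), while lower-order corrections lose at least one power of $n$.

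Next I will compute the limit of a single leading walk term. Set $k=(n-d)x$. A walk from $k$ to $k+d$ of length $K$ with $p$ $Z$-steps and $K-p$ $X$-steps stays within $O(1)$ of $k$. Hence each $\Gamma(j)\approx n\sqrt{x(1-x)}$ and each $\Xi(j)\approx n(1-2x)$. After dividing by the common prefactor $\sqrt{f_d}\approx n^d (x(1-x))^{d/2}$ and by $n^{K-d}$, the walk contributes
\begin{equation}
(1-2x)^{p}\,\bigl(x(1-x)\bigr)^{(K-p-d)/2},
\end{equation}
times a combinatorial count independent of $x$. Since $K-p-d$ equals the number of extra up–down pairs times two, it is even, so this is a genuine polynomial. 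Thus $\mathcal{h}^\infty_d(x,\vec\alpha)$ is a linear combination of the monomials $(1-2x)^p(x(1-x))^{m}$ with $p+2m=K-d$. This is consistent with degree $\le K-d$ from Theorem~\ref{thm:BellMatElPol}.

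The parity then follows at once. Under $x\mapsto1-x$, the factor $x(1-x)$ is invariant and $1-2x\mapsto-(1-2x)$, so each monomial picks up $(-1)^p=(-1)^{K-d-2m}=(-1)^{K-d}$. The coefficients do not depend on $x$, so the whole polynomial acquires $(-1)^{K-d}$.

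The main obstacle is the leading-order bookkeeping. I must justify that reorderings of noncommuting $\mathbf{S}_X,\mathbf{S}_Z$ and the distinct-index corrections only alter subleading terms, and that replacing $\Gamma(j)$ for $j=k+O(1)$ by $n\sqrt{x(1-x)}$ incurs only $O(1/n)$ relative error uniformly on compact subsets of $(0,1)$. Equality of polynomials then extends to $[0,1]$ by continuity. If the $Y$ terms appear (third setting), I would handle them the same way: $Y$ steps carry $\pm i\Gamma$, and pairing them gives real contributions of the same monomial form.
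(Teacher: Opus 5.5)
Your proof is correct and follows the paper's own argument. Each surviving length-$K$ walk contributes, after removing the common prefactor and rescaling, a term proportional to $(1-2x)^{p}\bigl(x(1-x)\bigr)^{m}$ with $p+2m=K-d$, which picks up $(-1)^{p}=(-1)^{K-d}$ under $x\mapsto 1-x$. Your additional bookkeeping (suppression of shorter walks and of the $n$-dependent correction terms, and the approximation of $\Gamma(j)$ on compact subsets of $(0,1)$ followed by continuity) makes explicit what the paper leaves implicit, and the opening appeal to the spin flip $X^{\otimes n}$ is motivation rather than a step you actually use.
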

\begin{proof}
Let us first consider a walk associated with Eq. \eqref{eq:product_of_S}. After removing the common square root prefactor, such a walk consists of $z$ steps from $k$ to $k$, each contributing $\Xi\to n(1-2x)$, and $m$ up--down step pairs, each contributing $\Gamma^{2}\to n^{2}\,x(1-x)$, with $z+2m=K-d$. Notice that the former contribution is odd in $x$ whereas the latter is even, and so the product of them corresponds to a path associated with Eq. \eqref{eq:product_of_S} that is either odd or even, depending on the parity of $z$. Since we have $z=K-d-2m$ the parity of $z$ equals the parity of $K-d$, meaning that all possible walks from $k$ to $k+d$ will yield functions of parity $K-d$. As a linear combination of even (odd) functions is itself even (odd), the parity of $\mathcal{h}^{\infty}_{d}$ equals the parity of $K-d$, concluding the proof.
\end{proof}
The parity has three consequences. First, for odd $K-d$ the limit polynomial is odd about $x=\tfrac12$. Because the reflection $x\mapsto1-x$ maps $[0,1]$ onto itself, a nonzero such polynomial is positive somewhere on $[0,1]$, so $\sup_{x\in[0,1]}\mathcal{h}^{\infty}_{d}\le0$ can hold only if it vanishes identically.
Its coefficients, linear in $\vec{\alpha}$, then all vanish, and the $d$-th off-diagonal degenerates from an inequality to a set of \emph{equality} constraints in the strict limit. Second, for even $K-d=2\ell$ the polynomial is even about $x=\tfrac12$, so in the variable $u=(2x-1)^{2}\in[0,1]$ it is a polynomial of degree $\ell$, and stoquasticity asks it to be nonnegative on $[0,1]$.Third, the parity has a physical origin: $x\mapsto1-x$ is the spin flip $k\mapsto n-k$ about the equator, a symmetry of the leading-order collective operators that the distinct-index corrections break only at $O(1/n)$. At the orders whose strict limit is polyhedral, the boundary therefore bends only at $O(1/n)$.

\section{Semidefinite characterization of the feasibility region}\label{sec:sdp}

The collapse of Sec.~\ref{sec:reduction} replaced the $O(n)$ stoquasticity hyperplanes of the $d$-th off-diagonal by the single condition $\sup_{x\in[0,1]}\mathcal{h}_{d}(x,\vec{\alpha})\le 0$ of Eq.~\eqref{eq:StoqSup}.
By Theorem~\ref{thm:BellMatElPol}, $\mathcal{h}_{d}(x,\vec{\alpha})$ is a univariate polynomial of degree at most $K-d$ in $x$ whose coefficients are linear in $\vec{\alpha}$.
The condition is therefore an instance of \emph{nonnegativity of a univariate polynomial on an interval}, a property that is exactly representable by a linear matrix inequality (LMI).
In this section we make this representation explicit.
Its consequence is that the entire thermodynamic-limit feasible set is a projected spectrahedron (a spectrahedral shadow).
Here, membership is decided by a single semidefinite program (SDP), and the two- and three-body cones of Secs.~\ref{sec:two-body}--\ref{sec:three-body} are its lowest-degree instances.
As the constraints are univariate, the description is exact, requiring no relaxation hierarchy.

We first name the object of interest, built on the limit polynomials $\mathcal{h}_{d}$.
\begin{definition}[Feasibility region]\label{def:thermoregion}
For fixed order $K$, particle number $n$, and settings $(\varphi,\theta)$, where $(\varphi,\theta)$ are suppressed from the notation, the feasibility region is
\begin{widetext}
\begin{equation}
\label{eq:FeasRegApprox}
\widetilde{\mathbbm{S}}^{K}_{n}\coloneqq
\Bigl\{\vec{\alpha}\in\mathbbm{R}^{|\mathcal{X}_K|} \;\Big|\;
\sup_{x\in[0,1]}\mathcal{h}_{d}(x,\vec{\alpha})\le 0,\ d=1,\dots,K\Bigr\}
=\bigcap_{d=1}^{K}\bigl\{\vec{\alpha}\mid -\mathcal{h}_{d}(\cdot,\vec{\alpha})\ge 0\ \text{on}\ [0,1]\bigr\}.
\end{equation}
\end{widetext}
\end{definition}
If we take the thermodynamic limit, we replace $\mathcal{h}_{d}$ by $\mathcal{h}^{\infty}_{d}$ in Eq.~\eqref{eq:FeasRegApprox}, and denote it as $\widetilde{\mathbbm{S}}^{K}_{\infty}$, which we call the thermodynamic-limit feasibility region. For each fixed $x$, $\{\vec{\alpha}\mid \mathcal{h}^{\infty}_{d}(x,\vec{\alpha})\le 0\}$ is a half-space, so $\mathbbm{S}^{K}_{\infty}$ is an intersection of half-spaces and hence convex.
Since the matrix elements are linear and homogeneous in $\vec{\alpha}$, it is in fact a closed convex cone.
The same construction applies at finite $n$: since the lattice points $k/(n-d)$ lie in $[0,1]$, with the endpoints among them, the condition $\sup_{x\in[0,1]}\mathcal{h}_{d}(x,\vec{\alpha})\le0$ with the finite-$n$ coefficients defines a closed convex cone contained in $\mathbbm{S}^{n,K}_{\varphi,\theta}$, so the semidefinite program below provides an exact stoquasticity certificate at every finite $n$. 

It is important to stress that while the characterization of polynomials by an SDP is exact for both $\mathcal{h}_{d}$ and $\mathcal{h}_{d}^{\infty}$, it is only an exact representation of a stoquasticity cone for the latter. This is due to the fact that, as mentioned before, Eq. \eqref{eq:StoqSup} is only a sufficient condition for stoquasticity, and so $\widetilde{\mathbb{S}}_{n}^{K}$ is an inner approximation of $\mathbbm{S}^{n,K}_{\varphi,\theta} $.

\subsection{Sum-of-squares certificate}\label{sec:sdp:sos}

Let us begin with a remark that everything in this section applies verbatim to $\mathcal{h}_{d}$ and $\mathcal{h}^{\infty}_{d}$ alike. However, for simplicity, we only write $\mathcal{h}_{d}$ below.

Nonnegativity of a univariate polynomial on $[0,1]$ is governed by the Markov--Luk\'acs theorem~\cite{powers_polynomials_2000, blekherman_semidefinite_2012}, which we state in the form convenient here as a lemma.
Let $z_r(x)\coloneqq(1,x,\dots,x^{r})^{\top}$ denote the vector of monomials up to degree $r$.

\begin{lemma}[LMI representation of an off-diagonal constraint]\label{thm:sos}
Let $p(x)\coloneqq -\mathcal{h}_{d}(x,\vec{\alpha})$, a polynomial of degree at most $K-d$.
Then $p(x)\ge 0$ for all $x\in[0,1]$ if and only if there exist positive semidefinite Gram matrices realizing the weighted sum-of-squares decomposition
\begin{widetext}
\begin{equation}
\label{eq:sosrep}
p(x)=
\begin{cases}
z_{\ell}(x)^{\top}G_0\,z_{\ell}(x)+x(1-x)\,z_{\ell-1}(x)^{\top}G_1\,z_{\ell-1}(x), & K-d=2\ell,\\[6pt]
x\,z_{\ell}(x)^{\top}G_0\,z_{\ell}(x)+(1-x)\,z_{\ell}(x)^{\top}G_1\,z_{\ell}(x), & K-d=2\ell+1,
\end{cases}
\end{equation}
\end{widetext}
with $G_0\succeq 0$ of size $(\ell{+}1)\times(\ell{+}1)$, $G_1\succeq 0$ of size $\ell\times\ell$ (even case) or $(\ell{+}1)\times(\ell{+}1)$ (odd case).
For $\ell=0$ the second term of the even case is absent.
\end{lemma}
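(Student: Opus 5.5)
The direction ``if'' is immediate, and I would dispose of it first. On $[0,1]$ the weights $x$, $1-x$ and $x(1-x)$ are nonnegative. Each quadratic form $z_r(x)^{\top}G\,z_r(x)$ with $G\succeq0$ equals $\|G^{1/2}z_r(x)\|^2\ge0$. Hence the right-hand side of Eq.~\eqref{eq:sosrep} is nonnegative on $[0,1]$.

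For ``only if'' I would invoke the Markov--Luk\'acs theorem. In its standard form it says that a real univariate polynomial $p$ of degree $m$ that is nonnegative on $[0,1]$ admits one of two representations. For $m=2\ell$ it is $p=s_0+x(1-x)s_1$ with $s_0,s_1$ sums of squares, $\deg s_0\le 2\ell$ and $\deg s_1\le 2\ell-2$. For $m=2\ell+1$ it is $p=x\,s_0+(1-x)\,s_1$ with $\deg s_0,\deg s_1\le2\ell$.

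I would sketch the proof of the theorem rather than cite it blindly, since it is short. Factor $p$ over $\mathbb{R}$. Roots in $(0,1)$ have even multiplicity, so they contribute squares. Complex pairs contribute $(x-a)^2+b^2$. Real roots $r\le0$ give factors $x-r=x+|r|$, and real roots $r\ge1$ give factors $r-x=(1-x)+(r-1)$. Each of these is a nonnegative combination of $x$ and $1-x$, or of $1$. Multiplying out, and using the identities $x^2=x\cdot x$, $(1-x)^2$, and $x(1-x)$ to pair the weights, one lands in the stated form with the degree bounds.

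The degree bookkeeping is the main obstacle. The statement says ``degree at most $K-d$'', so $p$ may have degree strictly lower than the nominal $m=K-d$. In that case I would argue that one can still write $p$ in the form with the larger nominal degree. For the odd case, a polynomial of lower even degree $2\ell$ can be written as $p=x\,p+(1-x)\,p$, where $p$ is itself a certificate of the even type. To turn this into the odd-type form I would rely on the known result that, for each fixed $m\ge\deg p$, the degree-$m$ Markov--Luk\'acs representation exists (Powers--Reznick). I would cite that result rather than reprove it. The identity $1=x+(1-x)$ handles lifting an even certificate into the odd-type form.

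The last step converts sums of squares into Gram matrices. A sum of squares $s=\sum_i q_i^2$ with $\deg q_i\le r$ equals $z_r^{\top}\bigl(\sum_i c_ic_i^{\top}\bigr)z_r$, where $c_i$ is the coefficient vector of $q_i$. The matrix $\sum_i c_ic_i^{\top}$ is positive semidefinite. This gives the stated sizes $(\ell+1)$ and $\ell$ (even case) or $(\ell+1)$ twice (odd case). For $\ell=0$ in the even case, $p$ is a nonnegative constant and $G_0=p\ge0$ suffices.

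Finally I would note that matching coefficients of $x^j$ in Eq.~\eqref{eq:sosrep} gives equations linear in $(\vec\alpha,G_0,G_1)$. The constraint set is therefore an LMI.
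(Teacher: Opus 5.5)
Your proposal is correct and follows the same route as the paper, which simply invokes the Markov--Luk\'acs theorem (citing Powers--Reznick and Nesterov) together with the Gram-matrix form of univariate sums of squares; your factorization sketch and your treatment of $\deg p<K-d$ spell out what the paper leaves to those citations. One detail in the lower-degree case: $p=x\,p+(1-x)\,p$ is not yet of odd type, since $p$ itself is not a sum of squares. Regrouping $p=s_0+x(1-x)s_1$ as $p=x\bigl[s_0+(1-x)^2s_1\bigr]+(1-x)\bigl[s_0+x^2s_1\bigr]$ fixes this, and in the even case $x=x^2+x(1-x)$ and $1-x=(1-x)^2+x(1-x)$ do the same, so no further citation is needed.
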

Equating the coefficients of equal powers of $x$ on the two sides of Eq.~\eqref{eq:sosrep} produces $K-d+1$ linear equations relating the entries of the Gram matrices to the coefficients of $p$.
Because those coefficients are linear in $\vec{\alpha}$ (Theorem~\ref{thm:BellMatElPol}), these equations are jointly linear in $(\vec{\alpha},G_0,G_1)$.
The $d$-th stoquasticity condition therefore carves out a spectrahedron in the lifted variables $(\vec{\alpha},G_0,G_1)$, and its projection onto $\vec{\alpha}$:
\begin{equation}
\widetilde{\mathbbm{S}}^{(d)}_n\coloneqq\bigl\{\vec{\alpha}\mid \exists\,G_{0/1}\succeq 0\ \text{satisfying Eq.~\eqref{eq:sosrep}}\bigr\},
\end{equation}
is a spectrahedral shadow~\cite{ben-tal_lectures_2001, blekherman_semidefinite_2012}.
The full feasibility region is the intersection $\widetilde{\mathbbm{S}}^{K}_{n}=\bigcap_{d=1}^{K}\widetilde{\mathbbm{S}}^{(d)}_n$, which is again SDP-representable.
\begin{corollary}\label{cor:membership}
Membership $\vec{\alpha}\in \widetilde{\mathbbm{S}}^{K}_{n}$ is decided by a single semidefinite feasibility program whose blocks have size at most $\lfloor(K-d)/2\rfloor+1$ on the $d$-th off-diagonal, hence at most $\lceil K/2\rceil$ overall.
A feasible collection
$\{G_0^{(d)},G_1^{(d)}\}_{d=1}^{K}$ therefore provides an explicit
certificate that $\vec{\alpha}\in \widetilde{\mathbbm{S}}^{K}_{n}$.
Conversely, if $\vec{\alpha}\notin \widetilde{\mathbbm{S}}^{K}_{n}$, then
there exist $d\in\{1,\ldots,K\}$ and $x^\star\in[0,1]$ such that
$\mathcal{h}_d(x^\star,\vec{\alpha})>0$.
\end{corollary}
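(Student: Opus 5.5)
The plan is to assemble the corollary directly from Lemma~\ref{thm:sos} applied off-diagonal by off-diagonal, and then read off block sizes and the converse. By Definition~\ref{def:thermoregion}, $\vec{\alpha}\in\widetilde{\mathbbm{S}}^{K}_{n}$ holds if and only if $p_d(x)\coloneqq-\mathcal{h}_d(x,\vec{\alpha})\ge0$ on $[0,1]$ for every $d=1,\dots,K$. For each $d$ I will invoke Lemma~\ref{thm:sos}, which turns this into the existence of $G_0^{(d)},G_1^{(d)}\succeq0$ satisfying the $K-d+1$ coefficient-matching equations of Eq.~\eqref{eq:sosrep}. These equations are linear in $(\vec{\alpha},G_0^{(d)},G_1^{(d)})$ by Theorem~\ref{thm:BellMatElPol}. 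Stacking all $d$ gives one block-diagonal LMI, $\mathrm{diag}(G_0^{(1)},G_1^{(1)},\dots,G_0^{(K)},G_1^{(K)})\succeq0$, together with a finite system of linear equalities. This is a single SDP feasibility problem in the lifted variables. Since the SOS representation of each $p_d$ gives $p_d\ge0$ on $[0,1]$ by inspection (each term is a nonnegative weight times a PSD quadratic form), any feasible collection is an explicit certificate of membership.

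For the block sizes, I will check both parity cases of Eq.~\eqref{eq:sosrep}. If $K-d=2\ell$, then $G_0$ has size $\ell+1=\lfloor(K-d)/2\rfloor+1$ and $G_1$ has size $\ell$. If $K-d=2\ell+1$, both matrices have size $\ell+1=\lfloor(K-d)/2\rfloor+1$. The quantity $\lfloor(K-d)/2\rfloor+1$ is maximised at $d=1$, where it equals $\lfloor(K-1)/2\rfloor+1=\lceil K/2\rceil$. This holds for both parities of $K$: for $K=2m$ it gives $m-1+1=m$, and for $K=2m+1$ it gives $m+1$.

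For the converse, suppose $\vec{\alpha}\notin\widetilde{\mathbbm{S}}^{K}_{n}$. By definition, some $d$ has $\sup_{x\in[0,1]}\mathcal{h}_d(x,\vec{\alpha})>0$. Because $\mathcal{h}_d(\cdot,\vec{\alpha})$ is a polynomial, hence continuous on the compact set $[0,1]$, the supremum is attained at some $x^\star$, and there $\mathcal{h}_d(x^\star,\vec{\alpha})>0$. Alternatively, by the "only if" direction of Lemma~\ref{thm:sos}, the infeasibility of the $d$-th block already implies that $p_d$ fails to be nonnegative somewhere.

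The only delicate point is degenerate degree. When the actual degree of $p_d$ is lower than $K-d$, for instance when leading coefficients vanish for particular $\vec{\alpha}$, one must make sure that Lemma~\ref{thm:sos} is applied with the nominal degree $K-d$ so that the Gram sizes are fixed independently of $\vec{\alpha}$. I expect this to be fine, since Markov--Lukács certificates of the nominal-degree form exist for any polynomial of degree at most $K-d$ that is nonnegative on $[0,1]$; the lemma is stated for degree "at most $K-d$", so it covers this. With that settled, the argument is pure bookkeeping.
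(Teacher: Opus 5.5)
Your proposal is correct and is essentially the paper's own argument: apply Lemma~\ref{thm:sos} to each off-diagonal, intersect the resulting spectrahedral shadows, and read the Gram sizes off Eq.~\eqref{eq:sosrep}; your check that $\lfloor(K-d)/2\rfloor+1$ peaks at $d=1$ with value $\lceil K/2\rceil$, and your converse via the definition of the supremum, supply bookkeeping the paper leaves implicit. Your degenerate-degree concern is also harmless, since a lower-degree Markov--Luk\'acs certificate can be zero-padded and, when the parity differs, rewritten in the nominal degree-$(K-d)$ form using $1=x+(1-x)$, $x=x^2+x(1-x)$, $1-x=(1-x)^2+x(1-x)$ and $x(1-x)=(1-x)x^2+x(1-x)^2$.
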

The decisive feature is exactness. For univariate polynomials, Eq.~\eqref{eq:sosrep} is necessary as well as sufficient. 
Every polynomial that is nonnegative on $[0,1]$ admits a decomposition of the form~\eqref{eq:sosrep} using Gram matrices of the finite sizes stated above~\cite{powers_polynomials_2000,nesterov_squared_2000}. Hence the resulting semidefinite program characterizes nonnegativity on $[0,1]$ exactly, rather than providing only a relaxation of it.

This is unlike the general multivariate situation, where a Positivstellensatz certificate may require sum-of-squares multipliers of unbounded degree, organized into a hierarchy of semidefinite programs that need not terminate~\cite{lasserre_global_2001,laurent_sums_2008}.
The off-diagonal structure established in Sec.~\ref{sec:offdiagstruc}, which compresses each off-diagonal to a single continuous index $x$, places the stoquasticity problem in this exact, univariate regime.

Moreover, notice that by the parity of Lemma~\ref{lem:parity}, the even off-diagonals land, as $n\to\infty$, in a symmetric slice on which the body stays polyhedral through $K-d=2$. The three-body quadric of Fig.~\ref{fig:threebody} is therefore a finite-$n$ feature, and a body's curvature first survives the limit at $K-d=4$. The same construction applies at every order $K$, only the degrees $K-d$ changing. 

For $K\ge7$ the interior maximizer $x^{\star}(\vec{\alpha})$ generally cannot be written in radicals (Sec.~\ref{sec:reduction}).
Lemma~\ref{thm:sos} bypasses this obstruction, since membership is decided without ever locating $x^{\star}$.
The semidefinite program in the thermodynamic limit thus provides for arbitrary $K$ what explicit elimination delivers only through $K=3$, characterizing the whole region, interior and endpoint regimes at once. On the subregion $\partial'\mathbbm{S}$, App.~\ref{app:boundary} describes the boundary locally: around a generic $\vec\alpha$, one at which the polynomial signs in Sturm's theorem do not vanish, the roots of $\partial_x h_d$ move continuously with $\vec\alpha$, so the same case applies on a whole open ball centred at $\vec\alpha$, and the boundary of $\partial' S$ is given by the geometric loci where such a sign crosses zero.

\subsection{Optimization over the stoquasticity cone}\label{sec:sdp:opt}

Because $\widetilde{\mathbbm{S}}^{K}_{\infty}$ is SDP-representable, optimizing any linear functional of the Bell coefficients over it is itself a semidefinite program.
Designing extremal stoquastic operators, or maximizing a linear figure of merit subject to stoquasticity, takes the form
\begin{equation}
\label{eq:sdpopt}
\begin{aligned}
\max_{\vec{\alpha}}\ &\vec{w}^{\top}\vec{\alpha} \\
\text{s.t.}\quad &\vec{\alpha}\in\widetilde{\mathbbm{S}}^{K}_{\infty},\quad \lVert\vec{\alpha}\rVert_2\le1,
\end{aligned}
\end{equation}
a semidefinite program in $\vec{\alpha}$ and the Gram matrices of Lemma~\ref{thm:sos}, with the second-order-cone norm bound (itself LMI-representable) fixing the scale of the homogeneous cone.
Stoquasticity can in this way be imposed jointly with the other semidefinite constraints that arise in the permutationally invariant Bell setting, and the semidefinite description already lets us design extremal Bell operators, whose vertex characterization we leave to future work.

\section{Two-body case}\label{sec:two-body}

The two-body case serves as the first nontrivial application of the approach introduced in Sec.~\ref{sec:reduction}.
It is also of direct physical relevance: two-body PI Bell operators underlie the largest Bell violations observed experimentally to date~\cite{schmied_bell_2016, engelsen_bell_2017}, and their stoquasticity structure was the original motivation for introducing the stoquasticity cone in Ref.~\cite{li_stoquastic_2026}.
For $K=2$, the first off-diagonal is at most linear in $k$ and the second off-diagonal is constant in $k$.
The stoquasticity condition $h_1(k,\vec{\alpha})\le0$ of the first off-diagonal reads explicitly~\cite{li_stoquastic_2026} 
\begin{widetext}
\begin{equation}
\label{eq:twobodyfirstdiag}
\alpha_0 \sin (\varphi) + \alpha_1 \sin (\theta) + \alpha_2 (-2 k+n-1) \sin (2 \varphi ) + \alpha_3 (-2 k+n-1) \sin ( \varphi + \theta ) + \alpha_4 (-2 k+n-1) \sin (2 \theta ) \leq 0.
\end{equation}
\end{widetext}
and the stoquasticity condition of the second off-diagonal, independent of $k$, is~\cite{li_stoquastic_2026} 
\begin{equation}
 \alpha_2 \sin ^2(\varphi )+ \alpha_3 \sin ( \varphi) \sin( \theta ) + \alpha_4 \sin ^2(\theta ) \leq 0.
 \nonumber
\end{equation}
After the substitution $k = (n-1) x$ and division by $n-1$, the single-body terms of Eq.~\eqref{eq:twobodyfirstdiag} are suppressed by $(n-1)^{-1}$ while $(-2k+n-1)=(n-1)(1-2x)$ exactly.
In the thermodynamic limit Eq.~\eqref{eq:twobodyfirstdiag} thus becomes
\begin{equation}
(1 - 2x)\bigl[\alpha_2 \sin (2 \varphi ) + \alpha_3 \sin ( \varphi + \theta ) + \alpha_4 \sin (2 \theta )\bigr] \leq 0.
\nonumber
\end{equation}
As the first off-diagonal is at most a degree-one polynomial in $x$, at every finite $n$ as well as in the limit, the supremum can only be attained at one of the two boundaries $x=0$ or $x=1$, which are the rows $k=0$ and $k=n-1$ themselves (with the single-body terms retained at finite $n$), resulting in two hyperplanes whose irredundancy is shown in Ref.~\cite{li_stoquastic_2026}. 
The second off-diagonal is independent of $x$ and the supremum is thus the same across the entire interval $[0,1]$.
This recovers the three-hyperplane characterization of $\mathbbm{S}^{n,2}_{\varphi,\theta}$ of Ref.~\cite{li_stoquastic_2026} exactly, at every $n$, directly from the elimination procedure.
Furthermore, the candidate maximizers are the two endpoints irrespective of $\vec{\alpha}$, so the three hyperplanes are sufficient for all $\vec{\alpha}$.
Equivalently, writing $A\coloneqq\alpha_0\sin\varphi+\alpha_1\sin\theta$ for the single-body part and $s\coloneqq\alpha_2\sin2\varphi+\alpha_3\sin(\varphi+\theta)+\alpha_4\sin2\theta$ for the two-body one, the two endpoint rows are $A\pm(n-1)s\le0$, that is $|s|\le -A/(n-1)$: the single-body term leaves $s$ free within a window of width $O(1/n)$. Therefore, the cone is full-dimensional at every finite $n$, i.e., it is a five-dimensional cone in $\mathbbm{R}^5$ with these three facets. 

In the strict thermodynamic limit the two conditions become $\pm s\le0$ and pin $s=0$, so the limit cone $\widetilde{\mathbbm{S}}^{2}_{\infty}$ degenerates to $\{s=0,\ \alpha_2\sin^2\varphi+\alpha_3\sin\varphi\sin\theta+\alpha_4\sin^2\theta\le0\}$, a four-dimensional cone with $\alpha_0,\alpha_1$ now free, and the single-body coefficients drop out altogether. 
The exact finite-$n$ hyperplanes above, which retain the single-body data, are therefore the physically relevant description. A related strict-limit degeneration occurs for three-body operators in Sec.~\ref{sec:three-body}, where the interior conic flattens to a pair of hyperplanes.

\section{Three-body case}\label{sec:three-body}

The three-body case is qualitatively richer than the two-body case.
For two-body operators, the constraint functions $\mathcal{h}_{d}(x, \vec{\alpha})$ were at most linear in $x$, so their suprema over $[0,1]$ were always attained at the endpoints $x = 0$ or $x = 1$.
For three-body operators, this is similarly the case for the second $(d=2)$ and third $(d=3)$ off-diagonals, as $\mathcal{h}_2(x, \vec{\alpha})$ and $\mathcal{h}_3(x, \vec{\alpha})$ are at most linear and constant in $x$, respectively. 
However, the first off-diagonal $(d=1)$ constraint is a degree-two
polynomial in $x$, meaning the supremum can now be attained at an
interior point $x^* \in (0,1)$ depending on the value of $\vec{\alpha}$.
Therefore, $x^*$ itself is a function of the Bell coefficients.
In particular, as we will show, the geometric object characterizing the boundary of the stoquasticity cone in this interior regime is a quadratic form. 
This means that it can be written in the form $\vec{\alpha}^T \mathbf{Q} \vec{\alpha}$, where $\mathbf{Q} \in \mathbbm{R}^{9 \times 9}$ is a symmetric matrix.
To see this, we first note that the function for the first off-diagonal of the three-body PI Bell operator is of the form $h_1(k, \vec{\alpha}) = k^2 \vec{A}' \cdot \vec{\alpha} + k \vec{B}' \cdot \vec{\alpha} + \vec{C}' \cdot \vec{\alpha}$
as is given by Theorem~\ref{thm:BellMatElPol}.
Here $\vec{A}', \vec{B}', \vec{C}'$ are implicitly functions of $\varphi, \theta, n$. 
The explicit expressions of the off-diagonal elements of single-body, two-body and three-body PI measurement operators can be found in App.~\ref{app:offdiagonals}. 
After making the substitution $k=(n-1)x$ and dividing by $(n-1)^2$, the general form remains the same
\begin{equation}
\label{eq:ThreeBodyFirstDiag}
    \mathcal{h}_1(x, \vec{\alpha}) = x^2 \vec{A} \cdot \vec{\alpha} + x \vec{B} \cdot \vec{\alpha} + \vec{C} \cdot \vec{\alpha}
\end{equation}
with $\vec{A}=\vec{A}'$, $\vec{B}=\vec{B}'/(n-1)$ and $\vec{C}=\vec{C}'/(n-1)^2$.
Theorem~\ref{thm:QuadraticSurface} shows how in the thermodynamic limit, the stoquasticity condition on the first off-diagonal can be characterized as a quadratic form in terms of $\vec{A},\vec{B},\vec{C}$. 
\begin{theorem}
\label{thm:QuadraticSurface}
The stoquasticity condition of the three-body PI Bell operator where the supremum of $\mathcal{h}_1(x,\vec{\alpha})$ is attained in the interior $x^* \in (0,1)$ is of the form 
\begin{equation}
\label{eq:QForm}
\begin{aligned}
- 4(\vec{A}\cdot\vec{\alpha})\, \mathcal{h}_1(x^*,\vec{\alpha}) &= \vec{\alpha}^{\top} \mathbf{Q}\,\vec{\alpha} \leq 0,\\
\text{with}\quad \mathbf{Q} &= \left[\vec{B}\vec{B}^{\top}-2\left(\vec{A}\vec{C}^{\top} + \vec{C}\vec{A}^{\top}\right)\right].
\end{aligned}
\end{equation}
\end{theorem}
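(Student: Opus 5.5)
The plan is to treat $\mathcal{h}_1(x,\vec{\alpha})$ of Eq.~\eqref{eq:ThreeBodyFirstDiag} as a quadratic in $x$ with scalar coefficients $a\coloneqq\vec{A}\cdot\vec{\alpha}$, $b\coloneqq\vec{B}\cdot\vec{\alpha}$, $c\coloneqq\vec{C}\cdot\vec{\alpha}$. We then eliminate $x$ exactly as prescribed in Sec.~\ref{sec:reduction}, by locating the interior stationary point and substituting it back.

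The first step is stationarity. Eq.~\eqref{eq:stationarity} gives $2ax^*+b=0$. For an interior maximum the second-derivative condition requires $2a<0$, so $a\neq0$ and $x^*=-b/(2a)$. Here interior means $0<-b/(2a)<1$; this is the regime hypothesized in the statement.

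The second step is substitution:
\[
\mathcal{h}_1(x^*,\vec{\alpha})=a\frac{b^2}{4a^2}-\frac{b^2}{2a}+c=c-\frac{b^2}{4a}.
\]
Multiplying by $-4a$ gives $-4a\,\mathcal{h}_1(x^*,\vec{\alpha})=b^2-4ac$, which is just the discriminant of the quadratic.

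The third step writes this as a quadratic form in $\vec{\alpha}$. We have $b^2=\vec{\alpha}^{\top}\vec{B}\vec{B}^{\top}\vec{\alpha}$. Symmetrizing the cross term gives
\[
4ac=4\,\vec{\alpha}^{\top}\vec{A}\vec{C}^{\top}\vec{\alpha}=2\,\vec{\alpha}^{\top}\bigl(\vec{A}\vec{C}^{\top}+\vec{C}\vec{A}^{\top}\bigr)\vec{\alpha}.
\]
Together these yield $\mathbf{Q}$ exactly as in Eq.~\eqref{eq:QForm}, and $\mathbf{Q}$ is symmetric by construction.

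The last step, and the only delicate one, is the direction of the inequality. Since $a<0$ in this regime, $-4a>0$, so $\mathcal{h}_1(x^*,\vec{\alpha})\le0$ holds if and only if $-4a\,\mathcal{h}_1(x^*,\vec{\alpha})\le0$, that is, $\vec{\alpha}^{\top}\mathbf{Q}\vec{\alpha}\le0$. Positivity of the multiplier is what lets the equivalence go through, and it relies on the maximum condition from Eq.~\eqref{eq:stationarity}.

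It is worth remarking that the condition $b^2\le4ac$ with $a<0$ is the familiar statement that a downward parabola stays nonpositive precisely when it has no real roots. This also matches the $\ell=1$ even case of Lemma~\ref{thm:sos}, so no further case analysis is needed. The endpoint regimes are handled separately by the rows $x=0,1$ and are not part of the statement.
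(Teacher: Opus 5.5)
Your proposal is correct and follows essentially the same route as the paper. Both locate the unique stationary point $x^*=-b/(2a)$, substitute to get $c-b^2/(4a)$, multiply by $-4a$ (positive because an interior maximum forces $a<0$), and symmetrize $4ac$ into $2\,\vec{\alpha}^{\top}(\vec{A}\vec{C}^{\top}+\vec{C}\vec{A}^{\top})\vec{\alpha}$ to obtain $\mathbf{Q}$. One small imprecision in your closing remark: with $a<0$, the condition $b^2\le 4ac$ means the parabola has no real roots or a double root, not strictly no real roots.
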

\begin{proof}
As $\mathcal{h}_1(x, \vec{\alpha})$ is a quadratic function in $x$, the first derivative with respect to $x$ is linear and admits a unique root of the form $x^* = - \frac{ \vec{B} \cdot \vec{\alpha}}{ 2 \vec{A} \cdot \vec{\alpha}}$.
Substituting the solution for $x^*$
into $\mathcal{h}_1(x,\vec{\alpha})$ gives $\mathcal{h}_1(x^*, \vec{\alpha}) =(\vec{C}\cdot\vec{\alpha}) - (\vec{B}\cdot\vec{\alpha})^2/ 4(\vec{A}\cdot\vec{\alpha})$.
Multiplying $\mathcal{h}_1(x^*, \vec{\alpha})$ by $- 4(\vec{A}\cdot\vec{\alpha})$, choosing the sign such that we multiply by a positive factor, allows us to remove the denominator.
This returns the stoquasticity condition
\begin{equation}
\label{eq:StoqQuad}
\begin{aligned}
&- 4(\vec{A}\cdot\vec{\alpha})(\vec{C}\cdot\vec{\alpha}) + (\vec{B}\cdot\vec{\alpha})^2 \leq 0\\
&\quad \text{if} \quad - 4(\vec{A}\cdot\vec{\alpha}) > 0.
\end{aligned}
\end{equation}
The condition is always satisfied as an interior supremum of the quadratic $\mathcal{h}_1$ requires concavity, $\partial_x^2 \mathcal{h}_1=2\,\vec{A}\cdot\vec{\alpha}<0$ (for $\vec{A}\cdot\vec{\alpha}\ge0$ the supremum is attained at an endpoint), so $- 4(\vec{A}\cdot\vec{\alpha}) > 0$.

We note that $(\vec{A}\cdot\vec{\alpha})(\vec{C}\cdot\vec{\alpha}) 
= \vec{\alpha}^{\top}
\tfrac{\vec{A}\vec{C}^{\top}+\vec{C}\vec{A}^{\top}}{2}
\vec{\alpha}$
is a quadratic form in $\vec{\alpha}$, and so is $(\vec{B}\cdot\vec{\alpha})^2$. 
Therefore $-4(\vec{A}\cdot\vec{\alpha}) \mathcal{h}_1(x^*, \vec{\alpha})$ is a quadratic form as well and substituting the expressions $(\vec{A}\cdot\vec{\alpha})(\vec{C}\cdot\vec{\alpha}) 
= \vec{\alpha}^{\top}
\tfrac{\vec{A}\vec{C}^{\top}+\vec{C}\vec{A}^{\top}}{2}
\vec{\alpha}$ and $\vec{\alpha}^{\top}\vec{B} \vec{B}^{\top} \vec{\alpha}$ into Eq.~\eqref{eq:StoqQuad} gives the desired result.
\end{proof}

In the form of Lemma~\ref{thm:sos}, Eq.~\eqref{eq:sosrep} reads $p(x)=z_1(x)^{\top}G_0\,z_1(x)+x(1-x)\,G_1$ with a $2\times2$ matrix $G_0\succeq0$ and a scalar $G_1\ge0$.
Eliminating the auxiliary variable $G_1$ projects this spectrahedron onto exactly the feasibility conditions of Sec.~\ref{sec:three-body}: the endpoint inequalities $\mathcal{h}_1(0,\vec{\alpha})\le0$ and $\mathcal{h}_1(1,\vec{\alpha})\le0$, supplemented in the interior regime by the quadratic condition $\vec{\alpha}^{\top}\mathbf{Q}\vec{\alpha}\le0$, which is binding precisely when the supremum is attained in $(0,1)$.
In the interior regime the conic condition coincides, up to the positive factor $4$, with $\det G_0\ge0$ evaluated at $G_1=0$, and the case distinction between interior and endpoint suprema together with the sign choice $\pm$ is absorbed into the single semidefinite condition $G_0\succeq0,\,G_1\ge0$.
The explicit coefficient matching is given in App.~\ref{app:sdpgram}, and App.~\ref{app:lift} assembles the full $K=3$ description into a single canonical LMI with one auxiliary variable, showing moreover that this variable cannot be eliminated: the three-body cone is a spectrahedral shadow but not a spectrahedron.
Geometrically, the three-body conic boundary of Sec.~\ref{sec:three-body} is the boundary of the projection of this lifted spectrahedron onto the Bell coefficients.

We now take a closer look at the form of $\mathbf{Q}$ in the thermodynamic limit and find that it simplifies.
Notice that $\vec{A}',\vec{B}',\vec{C}'$ are of at most degree $0$, degree $1$ and degree $2$ in $n$ respectively.
Consequently, $\vec{A},\vec{B},\vec{C}$ are of at most degree $0$, degree $1$ and degree $2$ in $(n-1)^{-1}$, and $\mathbf{Q}$ decomposes into three $n$-independent terms, $\mathbf{Q} =  \left( \mathbf{Q}_1 + \frac{1}{n-1} \mathbf{Q}_2 + \frac{1}{(n-1)^2} \mathbf{Q}_3 \right)$.
The decomposition is exact and fully explicit, as a direct computation from the off-diagonal formulas as shown in added separate appendix with this calculation, see App.~\ref{app:Qdecomp}. With $\vec{A}=\vec{A}_1$ at every $n$, one has
\begin{widetext}
\begin{align}
\vec{B} &= -\vec{A}_1-\tfrac{2}{n-1}\,\vec{e}_2,\qquad \vec{C} = \Bigl(1-\tfrac{1}{n-1}\Bigr)\vec{C}_1+\tfrac{1}{n-1}\,\vec{e}_2+\tfrac{1}{(n-1)^{2}}\,\vec{e}_1, \label{eq:ABCclosed}\\
\mathbf{Q}_1 &=  \vec{A}_1\vec{A}_1^{\top}-2\bigl(\vec{A}_1\vec{C}_1^{\top} + \vec{C}_1\vec{A}_1^{\top}\bigr) ,\qquad
\mathbf{Q}_2 = 2\bigl(\vec{A}_1\vec{C}_1^{\top} + \vec{C}_1\vec{A}_1^{\top}\bigr),\qquad
\mathbf{Q}_3 &= 4\vec{e}_2\vec{e}_2^{\top} - 2\left( \vec{A}_1\vec{e}_1^{\top} + \vec{e}_1\vec{A}_1^{\top} \right), \label{eq:Qclosed}
\end{align}
\end{widetext}
where $\vec{A}_1,\vec{C}_1$ denote the $n\to\infty$ limits of $\vec{A},\vec{C}$, supported on the three-body coefficients and given explicitly in App.~\ref{app:ThreeBodyHPlanes}, and $\vec{e}_1$, $\vec{e}_2$ are the single- and two-body coefficient vectors with nonvanishing entries $(\sin\varphi,\sin\theta)$ and $\bigl(\sin2\varphi,\sin(\varphi{+}\theta),\sin2\theta\bigr)$, respectively.
Notably, the two-body data cancels from $\mathbf{Q}_2$, whose sole effect is to soften the conic part of $\mathbf{Q}_1$.
The single- and two-body coefficients enter only at second order, through $\mathbf{Q}_3$.
The leading-order term $\mathbf{Q}_1$ has rank $2$ for all non-degenerate settings, as made concrete in Theorem~\ref{thm:rank2} below. Recall that the \emph{signature} $(p,q)$ of a real symmetric matrix is its number of positive and negative eigenvalues.

\begin{figure*}[t]
\centering
\includegraphics[width=0.86\textwidth]{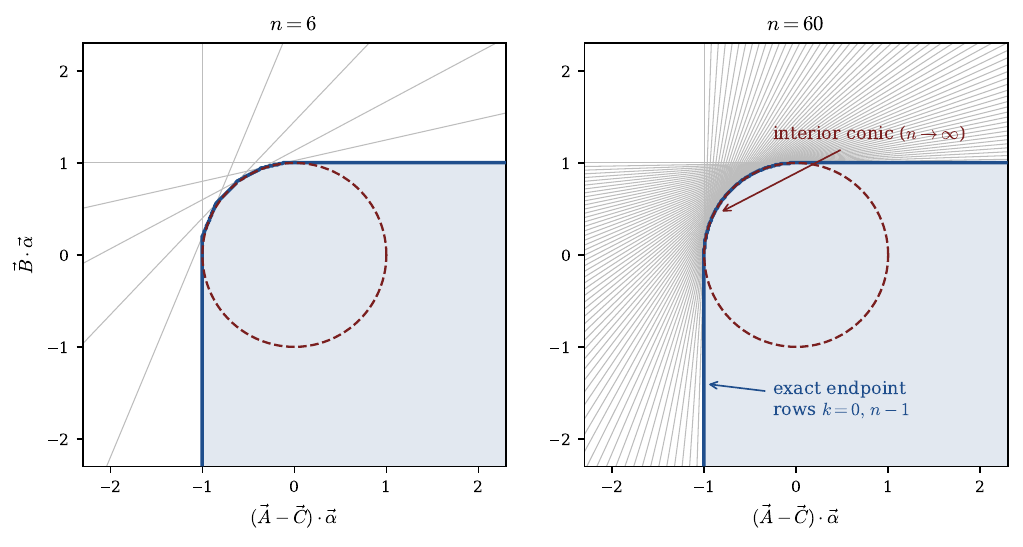}
\caption{\textbf{Geometry of the three-body interior stoquasticity condition.} The first off-diagonal is the quadratic $\mathcal{h}_1(x)=ax^{2}+bx+c$, with $a=\vec{A}\cdot\vec{\alpha}$, $b=\vec{B}\cdot\vec{\alpha}$, $c=\vec{C}\cdot\vec{\alpha}$. We plot the slice $a+c=-1$ in the axes $s=a-c$ and $b$. Each row $k$ contributes one stoquasticity constraint drawn as a grey line tangent to the interior conic $b^{2}\le4ac$, which on this slice is the circle $s^{2}+b^{2}=1$ (dashed). Off this arc the supremum is attained at an endpoint, where the exact rows $s=-1$ ($k{=}0$) and $b=1$ ($k{=}n{-}1$) take over. The shaded feasible set is bounded by these two edges together with the conic arc. As $n$ grows from $6$ to $60$ the rows fill in, and the boundary collapses uniformly onto the conic arc, reducing the $O(n)$ constraints to a single condition.}
\label{fig:threebody}
\end{figure*}

\begin{theorem}
\label{thm:rank2}
    The matrix $\mathbf{Q}$ has rank at most $3$ with the leading-order term $\mathbf{Q}_{1}$ of rank at most $2$. Moreover, for $\theta\neq \varphi$, $\mathbf{Q}_{1}$ has rank $2$, is indefinite, and has a signature $(1,1)$.
\end{theorem}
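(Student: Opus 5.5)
We argue from the explicit dyadic form of $\mathbf{Q}$ in Theorem~\ref{thm:QuadraticSurface} together with the decomposition of Eqs.~\eqref{eq:ABCclosed}--\eqref{eq:Qclosed}.

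\emph{Rank bounds.} Every term of $\mathbf{Q}=\vec{B}\vec{B}^{\top}-2(\vec{A}\vec{C}^{\top}+\vec{C}\vec{A}^{\top})$ has column space inside $\mathrm{span}\{\vec{A},\vec{B},\vec{C}\}$. Hence $\operatorname{rank}\mathbf{Q}\le 3$ at every $n$. Setting $\vec{B}=-\vec{A}_1$, as in the leading term, gives
\[
\mathbf{Q}_1=\vec{A}_1\vec{A}_1^{\top}-2(\vec{A}_1\vec{C}_1^{\top}+\vec{C}_1\vec{A}_1^{\top}),
\]
whose column space lies in $\mathrm{span}\{\vec{A}_1,\vec{C}_1\}$. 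So $\operatorname{rank}\mathbf{Q}_1\le2$.

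\emph{Reduction to a $2\times2$ form.} Suppose $\vec{A}_1$ and $\vec{C}_1$ are linearly independent. Then $\mathbf{Q}_1$ acts nontrivially only on $V=\mathrm{span}\{\vec{A}_1,\vec{C}_1\}$ and vanishes on $V^{\perp}$. Its inertia therefore equals that of the quadratic form $q(\vec\alpha)=a^2-4ac$, where $a=\vec{A}_1\cdot\vec\alpha$ and $c=\vec{C}_1\cdot\vec\alpha$. Because $\vec{A}_1,\vec{C}_1$ are independent, the map $\vec\alpha\mapsto(a,c)$ is surjective. The form $q$ has Gram matrix
\[
\begin{pmatrix}1&-2\\-2&0\end{pmatrix}
\]
in the variables $(a,c)$. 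Its determinant is $-4<0$, so it is nondegenerate and indefinite with signature $(1,1)$. By Sylvester's law of inertia, $\mathbf{Q}_1$ then has rank exactly $2$, one positive and one negative eigenvalue, and is indefinite. If instead $\vec{A}_1$ and $\vec{C}_1$ were parallel, $\mathbf{Q}_1$ would have rank at most $1$. The whole claim therefore reduces to a linear-independence statement.

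\emph{Main obstacle: independence of $\vec{A}_1,\vec{C}_1$ for $\theta\neq\varphi$.} This step needs the explicit three-body first-off-diagonal coefficients from App.~\ref{app:ThreeBodyHPlanes}. Both vectors are supported on the four three-body coordinates $(\alpha_{000},\alpha_{001},\alpha_{011},\alpha_{111})$. Their entries are trigonometric polynomials in $\varphi,\theta$: roughly, $\vec{A}_1$ collects the $\Xi^2$ and $\Gamma^2$ walk contributions at order $k^2$, and $\vec{C}_1$ those at order $n^2$.

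First check that $\vec{A}_1\neq0$ for generic angles. Then exhibit a $2\times2$ minor, built from the $\alpha_{000}$ and $\alpha_{111}$ entries, and show it factorizes with a factor $\sin(\theta-\varphi)$ or a similar factor vanishing only at $\theta=\varphi$. That factorization is exactly what the condition $\theta\neq\varphi$ in the statement anticipates. At $\theta=\varphi$ the two settings coincide, all $\mathbf{S}_{\vec{x}}$ of equal order become identical, and the vectors collapse onto a common direction. This confirms the degenerate case and signals which factor to extract.

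If a single minor vanishes on some additional locus, such as $\sin\varphi=0$, use another minor there, for example the one from the mixed entries $\alpha_{001},\alpha_{011}$. The minors should be chosen so that their common zero set is exactly $\theta=\varphi$. Establishing this, by direct trigonometric simplification of the appendix formulas, is the step I expect to require the most care.
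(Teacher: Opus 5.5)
\textbf{Gap: the target degeneracy locus in your last step is wrong.} Your reduction is the paper's: column space gives the rank bounds, $\mathbf{Q}_1=\Psi\mathbf{M}\Psi^{\top}$ with $\Psi=(\vec{A}_1\ \vec{C}_1)$ and $\mathbf{M}=\bigl(\begin{smallmatrix}1&-2\\-2&0\end{smallmatrix}\bigr)$, $\det\mathbf{M}=-4$, and Sylvester's law transfers rank and signature once $\vec{A}_1,\vec{C}_1$ are independent. The failure is in the last step. You aim to show that the $2\times2$ minors of $\Psi$ vanish simultaneously only at $\theta=\varphi$, and this is false.

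\emph{Counterexample.} Take $\theta=\varphi+\pi$. Then the local observables of Eq.~\eqref{eq:QubitMeas} satisfy $\mathbf{M}_1=-\mathbf{M}_0$. Hence $\mathbf{S}_{001}=-\mathbf{S}_{000}$, $\mathbf{S}_{011}=\mathbf{S}_{000}$ and $\mathbf{S}_{111}=-\mathbf{S}_{000}$. From App.~\ref{app:ThreeBodyHPlanes}, on the three-body coordinates one gets $\vec{A}_1=a_0w$ and $\vec{C}_1=c_0w$, where $w=(1,-1,1,-1)$, $a_0=12s-15s^{3}$, $c_0=3s(1-s^{2})$ and $s=\sin\varphi$. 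Therefore $\mathbf{Q}_1=a_0(a_0-4c_0)\,ww^{\top}=-3s^{4}(12-15s^{2})\,ww^{\top}$, which has rank one. At $\varphi=\pi/6$, $\theta=7\pi/6$ it is negative semidefinite with signature $(0,1)$, not $(1,1)$.

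Your own heuristic already points here. At $\theta=\varphi+\pi$ the settings coincide up to a sign, so every $\mathbf{S}_{\vec{x}}$ of a given order is $\pm\mathbf{S}_{000}$. Also, the factor $\sin(\theta-\varphi)$ you hope to extract vanishes at $\theta=\varphi+\pi$ as well. Indeed the paper's minor is $\tfrac{9}{2}(\cos2\varphi-\cos2\theta)\sin\varphi\sin\theta$, and $\cos2\varphi-\cos2\theta=2\sin(\theta-\varphi)\sin(\theta+\varphi)$.

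\emph{How to repair it.} The statement as worded, with $\varphi,\theta\in[0,2\pi)$ and only $\theta\neq\varphi$, cannot be proved. The hypothesis must be $\theta\not\equiv\varphi\pmod{\pi}$, which is what the paper's proof actually uses. Under that hypothesis your plan goes through exactly as in the paper. The $(\alpha_{000},\alpha_{111})$ minor settles the generic case. On its extra zero set ($\theta\equiv-\varphi\pmod{\pi}$, or $\sin\varphi\sin\theta=0$) other minors take over, for example:
\begin{itemize}
\item at $\theta=-\varphi$, the $(\alpha_{000},\alpha_{001})$ minor equals $-12\sin^{4}\varphi\cos^{2}\varphi$, which is nonzero there because $\theta\not\equiv\varphi\pmod{\pi}$ forces $\sin\varphi\cos\varphi\neq0$;
\item at $\varphi=0$, the $(\alpha_{001},\alpha_{111})$ minor equals $3\sin^{4}\theta$.
\end{itemize}
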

\begin{proof}
    The form of $\mathbf{Q}$ given by Eq.~\eqref{eq:QForm} shows that it is of at most rank $3$, as it is characterized by three vectors $\vec{A},\vec{B},\vec{C}$.
    To see that $\mathbf{Q}_1$ is rank $2$, we first recall that the first off-diagonal function $h_1(k,\vec{\alpha}) = \sum_{|\vec{x}| \leq 3} \alpha_{\vec{x}}\,\mathbf{S}_{\vec{x}}^{k,k+1}/\Gamma(k)$ is a linear combination of the first off-diagonal elements of PI measurement operators up to order 3.
    The matrix elements of the three-body PI measurement operators are of the form
    \begin{equation}
        \mathbf{S}_{|\vec{x}|=3}^{k,k+1}/\Gamma(k) = p(\varphi,\theta) F(k,n) + q(\varphi,\theta) G(k,n)
        \nonumber
    \end{equation}
    where
    \begin{equation}
    \begin{aligned}
    F(k,n) &= 4 k^2-4 k (n-1)+n^2-3 n+2, \\
    G(k,n) &= k (n-k-1).
    \end{aligned}
    \nonumber
    \end{equation}
    Their explicit expressions can be found in App.~\ref{app:offdiagonals}. After the substitution $k = (n-1)x$, dividing by $(n-1)^2$ and taking the thermodynamic limit, we obtain 

    \begin{equation}
    \begin{aligned}
    \frac{F((n{-}1)x,n)}{(n-1)^2} \xrightarrow{n \to \infty} 4x^2-4x+1 &\eqqcolon F_1(x),\\  \frac{G((n{-}1)x,n)}{(n-1)^2} = -x^2+x &\eqqcolon G_1(x).        
    \end{aligned}
    \nonumber
    \end{equation}
    In the thermodynamic limit, the contributions of the single-body and two-body PI measurement operators vanish, as they are constant and linear in $k,n$ respectively. 
    Thus we only need to consider linear combinations of $F_1(x)$ and $G_1(x)$.
    Let us denote by $\vec{A}_1$ the leading order contribution of $\vec{A}$ and similarly for $\vec{B}_1$ and $\vec{C}_1$, the limiting $x^2$-, $x$- and constant-coefficients of $h_1(x,\vec{\alpha})$.
    The inherited relation then reads $\vec{B}_1\cdot\vec{\alpha} = -\vec{A}_1\cdot\vec{\alpha}$ for all $\vec{\alpha}$, so we have that $\vec{A}_1 = -\vec{B}_1$.
    We note we can express the leading order of $\mathbf{Q}$ as $\mathbf{Q}_{1} = \vec{B_1}\vec{B_1}^{\top}-2\left(\vec{A_1}\vec{C_1}^{\top} + \vec{C_1}\vec{A_1}^{\top}\right) $.
    Now substituting $\vec{B}_1 = - \vec{A}_1$, we obtain $\mathbf{Q}_1 = \vec{A_1}\vec{A_1}^{\top}- 2\left(\vec{A_1}\vec{C_1}^{\top} + \vec{C_1}\vec{A_1}^{\top}\right) $.
    Let us define $\Psi \coloneqq\left( \vec{A}_1 \; \vec{C}_1 \right) \in \mathbbm{R}^{9 \times 2}$, then $\mathbf{Q}_1 = \Psi \mathbf{M} \Psi^T$ where $\mathbf{M} = \bigl( \begin{smallmatrix} 1 & -2 \\ -2 & 0 \end{smallmatrix} \bigr) $.
    As $\det(\mathbf{M}) = -4 < 0$, $\mathbf{M}$ is of rank at most $2$. The maximum rank is achieved when $\vec{A}_{1}$ and $\vec{C}_{1}$ are linearly independent with coincides with  $\theta\not\equiv\varphi\pmod{\pi}$: one checks from App.~\ref{app:ThreeBodyHPlanes} that the $(\alpha_{000},\alpha_{111})$ minor of $\Psi$ equals $\tfrac{9}{2}\left(\cos2\varphi-\cos2\theta\right)\sin\varphi\sin\theta$ and that all six minors vanish simultaneously precisely when $\theta\equiv\varphi\pmod{\pi}$. Therefore, for $\theta\not\equiv\varphi\pmod{\pi}$, $\Psi$ has full column rank, and $\mathbf{Q}_1 = \Psi\mathbf{M}\Psi^{\top}$ inherits the rank and signature of $\mathbf{M}$ by Sylvester's law of inertia, so $\mathbf{Q}_1$ has rank $2$ and signature $(1,1)$.
\end{proof}

The fact that $\mathbf{Q}_1$ is rank $2$ lets the leading-order condition be written in only two variables rather than the nine Bell coefficients. 
As stated in Theorem~\ref{thm:rank2}, $\mathbf{Q}_1$ is indefinite, of signature $(1,1)$.
A rank-two indefinite quadratic form is a product of two linear forms, so the homogeneous quadric $\vec{\alpha}^{\top}\mathbf{Q}_1\vec{\alpha}=0$ is \emph{degenerate}, a pair of hyperplanes. Concretely, with $a=\vec{A}_1\!\cdot\!\vec{\alpha}$ and $c=\vec{C}_1\!\cdot\!\vec{\alpha}$ one finds $\vec{\alpha}^{\top}\mathbf{Q}_1\vec{\alpha}=a^{2} - 4ac=a(a-4c)$, so in the strict thermodynamic limit the interior condition is bounded by the two hyperplanes $a=0$ and $4c=a$, not by a smooth surface. The curvature is recovered only from the full quadratic form of Theorem~\ref{thm:QuadraticSurface}: in the coordinates $a=\vec{A}\!\cdot\!\vec{\alpha}$, $b=\vec{B}\!\cdot\!\vec{\alpha}$, $c=\vec{C}\!\cdot\!\vec{\alpha}$ the interior condition reads $b^{2}\le 4ac$, governed by the non-degenerate ternary form $b^{2}-4ac$ of signature $(2,1)$. The associated quadric is a cone, whose affine plane sections are ellipses, parabolas or hyperbolas according to the slicing direction. Since the leading order enforces $b=-a$ (Theorem~\ref{thm:rank2}), the physical operators lie on a slice on which this cone degenerates to the two hyperplanes above.
The curvature of the three-body boundary is therefore an $O(1/n)$ effect, most pronounced at the moderate system sizes relevant to experiments. 

The two regimes are illustrated in Fig.~\ref{fig:threebody}, drawn in the polynomial-space coordinates $(s,b)=(a-c,\,b)$, in which the picture is the same for every $(\varphi,\theta)$: the circle and the tangent-line collapse follow purely from $\mathcal{h}_1$ being a quadratic nonnegative on $[0,1]$, the settings entering only through the coefficient maps (cf.\ the Remark of Sec.~\ref{sec:sdp:sos}). The eigendecomposition of $\mathbf{Q}_1$, including closed-form eigenvalues and eigenvectors, is worked out in Apps.~\ref{app:Eigvec} and~\ref{app:Q1diag}. 

In many-body experiments, the thermodynamic limit cannot be fully attained. 
Therefore, for the application of our results to these scenarios, we bound the error induced by taking the continuum limit.
We first note that for arbitrary functions, the continuous maximizer $(n-1) x^*$ and the discrete maximizer $k^*$ do not need to be the same when $x^*$ lies in the interior of the interval $[0,1]$. 
However, as $h_1(k,\vec{\alpha})$ is a quadratic function, the supremum of $h_1(k,\vec{\alpha})$ can only be attained in the interior when the function is concave. 
In this case, the discrete maximizer $k^*$ is the closest integer to $(n-1) x^*$ and we can avoid any error by rounding $(n-1) x^*$ appropriately. 
We now focus on the error introduced by the approximation $\mathbf{Q} \approx \mathbf{Q}_1$.
This is given by 
\begin{widetext}
\begin{equation}
\left|\vec{\alpha}^T \mathbf{Q}\,\vec{\alpha}
- \vec{\alpha}^T \mathbf{Q}_1\vec{\alpha}\right|
=
\left|\vec{\alpha}^T\!\left(\tfrac{1}{n-1}\mathbf{Q}_2
+ \tfrac{1}{(n-1)^2}\mathbf{Q}_3\right)\!\vec{\alpha}\right|
\leq
\left(\frac{\|\mathbf{Q}_2\|_2}{n-1}
+ \frac{\|\mathbf{Q}_3\|_2}{(n-1)^2}\right)\|\vec{\alpha}\|^2 \eqqcolon \Delta.
\end{equation}
\end{widetext}
The bound follows from the triangle inequality together with $|\vec{\alpha}^{T}\mathbf{A}\vec{\alpha}| \leq \|\mathbf{A}\|_2\,\|\vec{\alpha}\|^2$.
The error can thus be quantified by the spectral norms $\|\mathbf{Q}_2\|_2$ and $\|\mathbf{Q}_3\|_2$ of Eq.~\eqref{eq:Qclosed}, which depend only on the measurement parameters $(\varphi, \theta)$, in particular $\|\mathbf{Q}_2\|_2=2\bigl(\lVert\vec{A}_1\rVert\lVert\vec{C}_1\rVert+|\vec{A}_1\cdot\vec{C}_1|\bigr)$ in closed form. 
As the stoquasticity cone is homogeneous in $\vec{\alpha}$, we may without loss of generality set $\|\vec{\alpha}\|=1$.
The rate of convergence of the error is thus $O(n^{-1})$.
In practice, for systems of size $n$, the leading-order term $\vec{\alpha}^T\mathbf{Q}_1\vec{\alpha}\leq 0$ is thus a reliable approximation whenever $n \gg \|\mathbf{Q}_2\|_2/\epsilon$, where $\epsilon$ is determined by  $\vec{\alpha}^T\mathbf{Q}_1\vec{\alpha} = 
-\epsilon\|\vec{\alpha}\|^2$.
Consequently, points deeper inside the leading-order cone (larger $\epsilon$) require smaller $n$ for the approximation to be reliable.
As a concrete example, we consider the measurement parameters $(\varphi, \theta) = (\pi/6, 5\pi/6)$ which were used in the Bell experiments involving $480$ ${}^{87}\mathrm{Rb}$ atoms in a Bose-Einstein condensate~\cite{schmied_bell_2016} and $5\cdot 10^5$ atoms in a thermal ensemble~\cite{engelsen_bell_2017}.
The relative deviation $\Delta/\|\mathbf{Q}_1\|_2$ is below $2.8 \cdot 10^{-2}$ at $n=480$ and below $2.7 \cdot 10^{-5}$ at $n = 5 \cdot 10^5$. At the boundary points $x \in \{0,1\}$, the conditions are obtained by substituting $x=0$ and $x=1$ into $\mathcal{h}_{d}(x,\vec{\alpha}) \leq 0$, and no approximation is made.
The endpoint (linear) conditions are therefore exact at every $n$, and the $O(1/n)$ deviation $\Delta$ affects only the interior conic condition.

\section{Conclusion and outlook}

In this work we have shown that the stoquasticity conditions for PI Bell operators, which naively require checking $O(n)$ constraints per off-diagonal, collapse to a single condition per off-diagonal. The constraints along each off-diagonal share an underlying continuous structure, and their supremum is the nonnegativity of a univariate polynomial on an interval, which the Markov--Luk\'acs theorem converts into an \emph{exact} linear matrix inequality, with no relaxation hierarchy.
The resulting feasibility region is a projected spectrahedron at every order $K$, decided by a single semidefinite program that certifies stoquasticity already at finite $n$.
For two-body operators this recovers the known three-hyperplane characterization of Ref.~\cite{li_stoquastic_2026} exactly, at every $n$.
For three-body operators, it reveals a qualitatively new feature: the interior stoquasticity boundary is a quadric, that is, a cone in coefficient space rather than a hyperplane, described by a canonical linear matrix inequality with a single, provably necessary, auxiliary variable (App.~\ref{app:lift}).
Its curvature is an $O(1/n)$ effect, degenerating to a pair of hyperplanes in the strict thermodynamic limit.
We have quantified how accurately the thermodynamic-limit description approximates finite systems, and, for the subregion of Bell coefficients whose stoquasticity conditions remain purely linear, given a solver-free criterion based on Sturm's theorem (App.~\ref{app:boundary}).

Several directions remain open.
The conic geometry of the three-body interior regime
provides a compact description of the stoquasticity boundary, 
but a complete characterization of the cone also requires its 
vertex representation, which encodes the extremal Bell operators.
For the two-body case, this was found in
Ref.~\cite{li_stoquastic_2026}.
Finding the analogous description for the three-body cone at finite $n$ is a natural next step (in the strict limit the cone is polyhedral and its vertex description elementary), and the conic structure derived here directly constrains what the answer can look like.

The subleading corrections to this conic geometry may also be
physically relevant.
Optimizing the quantum-classical gap, which determines how strongly 
a Bell operator can distinguish quantum from classical correlations, 
may depend on corrections beyond the leading-order term.
The closed forms of Eq.~\eqref{eq:Qclosed} make these corrections explicit.
Understanding how the geometry of the stoquasticity boundary
at finite $n$ affects this optimization would connect the results 
here directly to experimentally relevant questions.

Extending the framework to operators of order $K\geq 4$
remains an open problem.
While we have established that the off-diagonal constraints are 
polynomials of degree at most $K-d$ for general $K$, the explicit 
expressions of these polynomials are not yet known beyond $K=3$.
Deriving these expressions would be the natural next step toward 
a systematic geometric characterization of the stoquasticity cone 
at arbitrary operator order, and would allow the interior-regime
analysis developed here for $K=3$ to be carried out for higher-order
operators.
Note that Lemma~\ref{lem:parity} already constrains the outcome: the strict-limit cones remain polyhedral through $K=4$, and the first curved thermodynamic-limit boundary can appear only at $K=5$,  so that for all lower orders the curved geometry resides entirely in the finite-$n$ corrections.

A second axis of generalization is the local dimension.
The Bell correlations now being probed in spin-1 atomic
ensembles~\cite{kobus_multisetting_2025} are governed by
three-outcome PI Bell operators~\cite{aloy_deriving_2024, 
muller-rigat_three-outcome_2024, 
aloy_nonlocality_2026},
and numerical evidence indicates that stoquasticity remains
the relevant structural property there: the qutrit state of
maximal violation stays concentrated and smooth on the
symmetric Dicke basis, consistent with a sign-free,
nonnegative-amplitude optimum (see Fig.~3 of
Ref.~\cite{muller-rigat_three-outcome_2024}.
The geometry, however, changes qualitatively.
For qubits the symmetric sector is a one-dimensional chain,
each off-diagonal is a univariate polynomial, and its
nonnegativity on an interval is a linear
matrix inequality, the single-SDP exactness exploited
throughout this work.
For $n$ qutrits the symmetric sector is instead the
two-dimensional simplex of partitions $(\mu_0,\mu_1,\mu_2)$
of $n$, and stoquasticity becomes the
nonnegativity of a \emph{multivariate} polynomial on that
simplex.
In two or more variables nonnegative polynomials are no
longer sums of squares in general, so the exact representability is lost and
one must instead climb a convergent Positivstellensatz
hierarchy,
exact only in the limit.
Characterizing the qutrit stoquasticity cone is therefore a
natural and experimentally motivated, but harder
problem. 

\section*{Code availability}

The supporting Mathematica notebooks used in this project can be found at: \url{https://github.com/janli11/ExactSDPStoqPIBell}

\section*{AI statement}

The authors used generative AI tools, including ChatGPT and Claude, during the preparation of this manuscript. These tools were used as aids for tasks such as exploring algebraic manipulations and suitable formulations of intermediate bounds, suggesting approaches for symbolic calculations and Mathematica implementations, checking or developing intermediate proof arguments, and assisting with the organization and presentation of the manuscript. The scientific questions, principal results, proofs, and conclusions were developed by the authors. All AI-assisted calculations, arguments, and text incorporated into the manuscript were independently checked and verified by the authors, who take full responsibility for the content of the work.

\acknowledgments
J.L., E.v.N., and J.T. acknowledge the support received by the Dutch National Growth Fund
(NGF), as part of the QDNL programme. 
J.T. acknowledges the support received from the European Union's Horizon Europe research and innovation programme through the ERC StG FINE-TEA-SQUAD (Grant No.~101040729). 
This publication is part of the `Quantum Inspire - the Dutch Quantum Computer in the Cloud' project (with project number [NWA.1292.19.194]) of the NWA research program `Research on Routes by Consortia (ORC)', which is funded by the Netherlands Organisation for Scientific Research (NWO).
This work was supported by the Netherlands Organization for Scientific Research (NWO/OCW), as part of Quantum Limits (project number SUMMIT.1.1016).

The views and opinions expressed here are solely
those of the authors and do not necessarily reflect those of the funding institutions. 
None of the funding institutions can be held responsible for them.

\clearpage
\pagestyle{plain}
\bibliographystyle{apsrev4-1}
\bibliography{referencesZotero}

\newpage
\appendix
\onecolumngrid

\section{Matrix elements of PI measurement operators}\label{app:offdiagonals}

Throughout the appendices we work on a general spin-$J$ block and write $\Gamma(k) \coloneqq \sqrt{(k+1)(2J-k)}$ and $\Xi(k) \coloneqq 2J-2k$.
On the symmetric block, $2J=n$, these are the functions of Eq.~\eqref{eq:singlePauliel}. Let us start by reminding that a PI Bell operator 
\begin{equation}
\mathbf{B} \coloneqq \sum_{\vec{x}\,\in\,\mathcal{X}_K} \alpha_{\vec{x}}\,
\mathbf{S}_{\vec{x}},
\qquad
\mathcal{X}_K \coloneqq \bigcup_{k=1}^K \{0,1\}^k,
\end{equation}
is a linear combination of PI measurement operators. 
The explicit expression for single-body, two-body and three-body PI measurement operators are
\begin{equation}
    \label{eq:SingleTwoThreeBodyPIMeasOP}
     \mathbf{S}_{x_1} := \sum_{i=0}^{n-1} \mathbf{M}_{x_1}^{i},
     \; \quad \;
     \mathbf{S}_{x_1 x_2}:=\sum_{\substack{i_1,i_2=0\\i_1\neq i_2}}^{n-1} \mathbf{M}_{x_1}^{i_1} \otimes \mathbf{M}_{x_2}^{i_2},
     \; \quad \;
     \mathbf{S}_{x_1 x_2 x_3}:=\sum_{\substack{i_1,i_2,i_3=0\\ \text{all distinct} }}^{n-1} \mathbf{M}_{x_1}^{i_1} \otimes \mathbf{M}_{x_2}^{i_2} \otimes \mathbf{M}_{x_3}^{i_3},
\end{equation}
where
\begin{equation}\label{eq:qubitmeasop}
\mathbf{M}^i_0=\cos(\varphi) Z^{i}+ \sin(\varphi) X^{i}, \qquad \mathbf{M}^i_1=\cos(\theta) Z^{i}+ \sin(\theta) X^{i}.
\end{equation}
Here $X,Y,Z$ denote the Pauli X, Pauli Y and Pauli Z matrices. 
Let us define the $K$-body PI Pauli operator as 
\begin{align}
   \mathbf{S}_{P_1 P_2 \cdots P_K} \coloneq \sum_{\substack{i_1,...,i_K=0\\ \text{all distinct} }}^{n-1} P_1^{i_1} \otimes P_2^{i_2} \otimes \cdots \otimes P_K^{i_K},
   \label{eq:Paulicorrelator}
\end{align}
where $P_j \in \{X,Y,Z\}$.
We call $K$ the degree of the $K$-body PI Pauli operator.
An important fact is that any $K$-body PI measurement operators, and thus also $K$-body PI Bell operators, can be written as a linear combination of PI Pauli operators of degree at most $K$.
To see this above, let us start by noting that the single-body permutationally invariant measurement operators can be written as 
\begin{align}
\mathbf{S}_0 = \cos\varphi \, \mathbf{S}_Z + \sin\varphi \, \mathbf{S}_X, \quad
\mathbf{S}_1 = \cos\theta \, \mathbf{S}_Z + \sin\theta \, \mathbf{S}_X, \nonumber
\end{align}
by plugging Eq.~\ref{eq:qubitmeasop} into the first term of Eq.\eqref{eq:SingleTwoThreeBodyPIMeasOP}, where
\begin{align}
\mathbf{S}_Z = \sum_{i=0}^{n-1} Z^i, \quad
\mathbf{S}_X = \sum_{i=0}^{n-1} X^i. \nonumber
\end{align}
Doing the same for two-body permutationally invariant measurement operators of Eq.~\ref{eq:SingleTwoThreeBodyPIMeasOP}, we obtain
\begin{align}
\mathbf{S}_{00} &= \cos^2\varphi \, \mathbf{S}_{ZZ} + 2 \cos\varphi \sin\varphi \, \mathbf{S}_{ZX} + \sin^2\varphi \, \mathbf{S}_{XX}, \nonumber\\
\mathbf{S}_{01} &= \cos\varphi \cos\theta \, \mathbf{S}_{ZZ} + (\cos\varphi \sin\theta + \sin\varphi \cos\theta) \mathbf{S}_{ZX} + \sin\varphi \sin\theta \, \mathbf{S}_{XX}, \nonumber\\
\mathbf{S}_{11} &= \cos^2\theta \, \mathbf{S}_{ZZ} + 2 \cos\theta \sin\theta \, \mathbf{S}_{ZX} + \sin^2\theta \, \mathbf{S}_{XX}, \nonumber
\end{align}
and similarly for the three-body permutationally invariant measurement operators of Eq.~\ref{eq:SingleTwoThreeBodyPIMeasOP}  we obtain
\begin{align}
\mathbf{S}_{000} &= \cos^3\varphi \, \mathbf{S}_{ZZZ} + 3 \cos^2\varphi \sin\varphi \, \mathbf{S}_{ZZX} + 3 \cos\varphi \sin^2\varphi \, \mathbf{S}_{ZXX} + \sin^3\varphi \, \mathbf{S}_{XXX}, 
\nonumber\\
\mathbf{S}_{001} &= \cos^2\varphi \cos\theta \, \mathbf{S}_{ZZZ} + \left( 2 \cos\theta \cos\varphi \sin\varphi + \cos^2\varphi \sin\theta \right) \mathbf{S}_{ZZX} \nonumber\\
&\quad+ \left( 2 \sin\theta \sin\varphi \cos\varphi + \sin^2\varphi \cos\theta \right) \mathbf{S}_{ZXX} + \sin^2\varphi \sin\theta \, \mathbf{S}_{XXX},
\nonumber\\
\mathbf{S}_{011} &= \cos^2\theta \cos\varphi \, \mathbf{S}_{ZZZ} + \left( 2 \cos\varphi \cos\theta \sin\theta + \cos^2\theta \sin\varphi \right) \mathbf{S}_{ZZX} \nonumber\\
&\quad+ \left( 2 \sin\varphi \sin\theta \cos\theta + \sin^2\theta \cos\varphi \right) \mathbf{S}_{ZXX} + \sin^2\theta \sin\varphi \, \mathbf{S}_{XXX},
\nonumber\\
\mathbf{S}_{111} &= \cos^3\theta \, \mathbf{S}_{ZZZ} + 3 \cos^2\theta \sin\theta \, \mathbf{S}_{ZZX} + 3 \cos\theta \sin^2\theta \, \mathbf{S}_{ZXX} + \sin^3\theta \, \mathbf{S}_{XXX}. \nonumber
\end{align}
This extends similarly to higher order PI measurement operators, meaning that they can be expressed as a linear combination of PI Pauli operators with weights given by trigonometric functions of $\varphi, \theta$. 
We note that for the PI Pauli operators defined in Eq. \eqref{eq:Paulicorrelator}, only the number of distinct Pauli operators $X,Y,Z$ is relevant and not the order.
For example $\mathbf{S}_{ZXX} = \mathbf{S}_{XZX} = \mathbf{S}_{XXZ}$.

We note that the PI measurement operators $\mathbf{S}_{\vec{x}}$, and therefore also Bell operators $\mathbf{B}$, are symmetric matrices. 
This is because the Pauli matrices $\mathbf{X},\mathbf{Z}$ are symmetric in the computational basis and therefore the local measurement operators $\mathbf{M}^i_x$ are as well.
Symmetry is preserved under taking tensor products and linear combinations, thus PI measurement operators $\mathbf{S}_{\vec{x}}$ and consequently the Bell operators $\mathbf{B}$ are symmetric operators in the computational basis. 
Additionally, in the block diagonal form $\mathbf{B} = \oplus_J B_J$, cf. Sec.~\ref{sec:PIBell} in the main text, the basis of each subblock, given by Eq.~\eqref{eq:DickeBasis}.
Since this basis is related to the computational basis by a real change of basis, the block diagonal form is hence also symmetric.

\section{Higher-body to single-body PI Pauli operator identities}\label{app:hightosingle}

To calculate the matrix elements of the higher order PI Pauli operators, it is convenient to write them in terms of products of single-body PI Pauli operators.
The decompositions of two-body and three-body PI Pauli operators in terms of single-body Pauli operators are given by~\cite{li_stoquastic_2026}

\begin{align}
    \mathbf{S}_{ZZ} = \mathbf{S}_Z \mathbf{S}_Z - n \mathbbm{I}, \quad
    \mathbf{S}_{ZX} = \mathbf{S}_Z \mathbf{S}_X - i\,\mathbf{S}_Y, \quad
    \mathbf{S}_{XX} = \mathbf{S}_X \mathbf{S}_X - n \mathbbm{I} \nonumber
\end{align}
for the two-body PI Pauli operators and
\begin{align}
    \mathbf{S}_{ZZZ} &= (\mathbf{S}_Z)^3 + (2-3n)\mathbf{S}_Z ,\nonumber\\
    \mathbf{S}_{ZZX} &= (\mathbf{S}_Z)^2 \mathbf{S}_X - n \mathbf{S}_X- i \mathbf{S}_{Y} \mathbf{S}_Z - i \mathbf{S}_{Z} \mathbf{S}_{Y} ,\nonumber\\
    \mathbf{S}_{ZXX} &= \mathbf{S}_Z (\mathbf{S}_X)^2 +(2-n) \mathbf{S}_Z - 2i \mathbf{S}_Y \mathbf{S}_X ,\nonumber\\
    \mathbf{S}_{XXX} &= (\mathbf{S}_X)^3 + (2-3n)\mathbf{S}_X \nonumber
\end{align}
for the three-body PI Pauli operators.

\subsection{Higher order operators}\label{app:sub:hightosingleprod}

We will now show that each $K$-body PI Pauli operator, can be decomposed into products of single-body PI Pauli operators. 
To see this, we note that we can always do the following
\begin{equation}
\begin{aligned}
    \mathbf{S}_{P_1 P_2 \ldots P_K} = &\sum_{|\{i_1, i_2,\ldots,i_K\}|=K} P_1^{i_1}  P_2^{i_2}  \ldots P_K^{i_K} \nonumber \\
    = &\sum_{|\{i_1, i_2,\ldots,i_K\}|=K} P_1^{i_1}  P_2^{i_2}  \ldots P_K^{i_K}
    + \sum_{\substack{|\{i_1, i_2,\ldots,i_{K-1}\}|=K-1 \\
                i_K \in \{i_1, i_2,\ldots,i_{K-1}\}}}
                P_1^{i_1}  P_2^{i_2}  \ldots P_K^{i_K} - \sum_{\substack{|\{i_1, i_2,\ldots,i_{K-1}\}|=K-1 \\ 
                i_K \in \{i_1, i_2,\ldots,i_{K-1}\}}}
                P_1^{i_1}  P_2^{i_2}  \ldots P_K^{i_K} \nonumber\\
    = &\sum_{|\{i_1, i_2,\ldots,i_{K-1}\}|=K-1,i_K} P_1^{i_1}  P_2^{i_2}  \ldots P_K^{i_K} 
    - \sum_{\substack{|\{i_1, i_2,\ldots,i_{K-1}\}|=K-1 \\ 
                i_K \in \{i_1, i_2,\ldots,i_{K-1}\}}}
                P_1^{i_1}  P_2^{i_2}  \ldots P_K^{i_K} \nonumber\\
    = &\sum_{|\{i_1, i_2,\ldots,i_{K-1}\}|=K-1} P_1^{i_1}  P_2^{i_2}  \ldots P_{K-1}^{i_{K-1}} \sum_{i_K} P_K^{i_K} \nonumber\\
    &\quad - \sum_{\substack{|\{i_1, i_2,\ldots,i_{K-1}\}|=K-1 \\ 
                i_K \in \{i_1, i_2,\ldots,i_{K-1}\}}}
                P_1^{i_1}  P_2^{i_2}  \ldots P_K^{i_K}.
\end{aligned}
\end{equation}
The second term in the last line is a $(K-1)$-th order PI Pauli operator as $i_K \in \{i_1, i_2,\ldots,i_{K-1}\}$.
The term $\sum_{|\{i_1, i_2,\ldots,i_{K-1}\}|=K-1} P_1^{i_1}  P_2^{i_2}  \ldots P_{K-1}^{i_{K-1}} \sum_{i_K} P_K^{i_K}$ is a product of a single-body PI Pauli operator and a $(K-1)$-th order PI Pauli operator.
This process can be applied iteratively to the resulting lower-order operators until we are left only with products of single-body PI Pauli operators. 

\section{Matrix elements of Two- and Three-body PI Pauli operators}

The explicit expressions of the off-diagonal elements of the three-body PI measurements operators are given below~\cite{li_stoquastic_2026}.
As the operators are symmetric, if is sufficient to only give one elements on one side of the diagonal $(k,k+d)$ with $d >0$.
The functions characterizing the matrix elements of the first off-diagonal are given by
\begin{align}
\mathbf{S}_{000}^{k,k+1} = &3 \Gamma(k) \sin (\phi ) \left(\left(4 k^2-4 k (n-1)+n^2-3 n+2\right) \cos ^2(\phi )+k (-k+n-1) \sin ^2(\phi )\right),\nonumber\\ 
\mathbf{S}_{001}^{k,k+1} =& \Gamma(k) \bigg[\left(4 k^2-4 k (n-1)+n^2-3 n+2\right) \cos (\phi ) \left(\sin (\theta ) \cos (\phi )+2 \cos (\theta ) \sin (\phi ) \right) \nonumber \\
& - 3 k \sin (\theta ) (k-n+1) \sin ^2(\phi )\bigg],\nonumber\\ 
\mathbf{S}_{011}^{k,k+1} =& \Gamma(k) \bigg[\cos (\theta ) \left(4 k^2-4 k (n-1)+n^2-3 n+2\right) \left(2 \sin (\theta ) \cos (\phi )+\cos (\theta ) \sin (\phi )\right) \nonumber \\
&-3 k \sin ^2(\theta ) (k-n+1) \sin (\phi )\bigg],\nonumber\\ 
\mathbf{S}_{111}^{k,k+1} =& 3 \sin (\theta ) \Gamma(k) \left(\cos ^2(\theta ) \left(4 k^2-4 k (n-1)+n^2-3 n+2\right)+k \sin ^2(\theta ) (-k+n-1)\right),\nonumber
\end{align}
the functions characterizing the matrix elements of the second off-diagonals are given by
\begin{align}
\mathbf{S}_{000}^{k,k+2} &= 3 \Gamma(k+1)  \Gamma(k)  (-2 k+n-2) \sin ^2(\phi ) \cos (\phi ),\nonumber\\ 
\mathbf{S}_{001}^{k,k+2} &= \Gamma(k+1)  \Gamma(k) (-2 k+n-2) \sin (\phi ) (2 \sin (\theta ) \cos (\phi )+\cos (\theta ) \sin (\phi )),\nonumber\\ 
\mathbf{S}_{011}^{k,k+2} &= \sin (\theta ) \Gamma(k+1)  \Gamma(k) (-2 k+n-2) (\sin (\theta ) \cos (\phi )+2 \cos (\theta ) \sin (\phi )),\nonumber\\ 
\mathbf{S}_{111}^{k,k+2} &= 3 \sin ^2(\theta ) \cos (\theta ) \Gamma(k+1)  \Gamma(k) (-2 k+n-2),\nonumber
\end{align}
and functions characterizing the elements of the third off-diagonals are given by
\begin{align}
\mathbf{S}_{000}^{k,k+3} &= \Gamma(k+2) \Gamma(k+1)  \Gamma(k) \sin ^3(\phi ),\nonumber\\ 
\mathbf{S}_{001}^{k,k+3} &= \sin (\theta ) \Gamma(k+2) \Gamma(k+1)  \Gamma(k) \sin ^2(\phi ),\nonumber\\ 
\mathbf{S}_{011}^{k,k+3} &= \sin ^2(\theta ) \Gamma(k+2) \Gamma(k+1)  \Gamma(k) \sin (\phi ),\nonumber\\ 
\mathbf{S}_{111}^{k,k+3} &= \sin ^3(\theta ) \Gamma(k+2) \Gamma(k+1)  \Gamma(k).\nonumber
\end{align}
We also give the matrix elements of the first off-diagonal of the single-body and two-body PI measurement operators
\begin{align}
\mathbf{S}_0^{k,k+1} &= \Gamma(k) \sin (\phi),\nonumber\\ 
\mathbf{S}_1^{k,k+1} &= \Gamma(k) \sin (\theta), \nonumber\\ 
\mathbf{S}_{00}^{k,k+1} &=\Gamma(k) (-2 k+n-1) \sin (2 \phi ), \nonumber\\ 
\mathbf{S}_{01}^{k,k+1} &= \Gamma(k) (-2 k+n-1) \sin ( \phi + \theta ), \nonumber\\ 
\mathbf{S}_{11}^{k,k+1} &= \Gamma(k) (-2 k+n-1) \sin (2 \theta ). \nonumber
\end{align}

\section{Q-decomposition}\label{app:Qdecomp}

Let us define
\begin{align}
    \vec{e}_1 &\coloneq (\sin (\phi), \sin (\theta), 0,0,0,0,0,0,0), \nonumber \\
    \vec{e}_2 &\coloneq (0,0,\sin (2 \phi ), \sin ( \phi + \theta ), \sin (2 \theta ),0,0,0,0), \nonumber \\
    \vec{e}_{3,F} &\coloneq (0,0,0,0,0,
    3 \sin (\phi ) \cos ^2(\phi ),
    \cos (\phi ) (\sin (\theta ) \cos (\phi )+2 \cos (\theta ) \sin (\phi )),\nonumber\\
    & \hspace{2cm} \cos (\theta ) (2 \sin (\theta ) \cos (\phi )+\cos (\theta ) \sin (\phi ))
    ,3 \sin (\theta ) \cos ^2(\theta )), \nonumber\\
    \vec{e}_{3,G} &\coloneq (0,0,0,0,0,\sin ^3(\phi ),\sin (\theta ) \sin ^2(\phi ),\sin ^2(\theta ) \sin (\phi ),\sin ^3(\theta ),
\end{align}
which characterize the trigonometric coefficients of the first off-diagonal of the PI measurement operators.
Additionally, let us define 
\begin{align}
    F(k,n) \coloneq 4 k^2-4 k (n-1)+n^2-3 n+2, \quad \text{and} \quad
    G(k,n) \coloneq k (-k+n-1).
\end{align}
The polynomial characterizing the stoquasticity condition of the first off-diagonal can thus be written as 
\begin{equation}
   h_{1}(n,k;\varphi,\theta,\vec{\alpha}) = \big( \vec{e}_1 + (-2 k+n-1) \vec{e}_2  + F(k,n) \vec{e}_{3,F} + G(k,n) \vec{e}_{3,G}\big) \cdot \vec{\alpha}.
\end{equation}
Making the substitution $k\to (n-1)x$, and dividing by $(n-1)^{2}$, we obtain 
\begin{equation}
   \mathcal{h}_{1}(x,\vec{\alpha}) = \big( \frac{1}{(n-1)^2} \vec{e}_1 + \frac{1}{n-1} (1-2x) \vec{e}_2  + \mathcal{F}(x,n) \vec{e}_{3,F} + \mathcal{G}(x,n) \vec{e}_{3,G}\big) \cdot \vec{\alpha},
\end{equation}
where
\begin{equation}
    \mathcal{F}(x,n) \coloneq 4 x^2-4 x + 1 -\frac{1}{(n-1)} \quad \text{and} \quad \mathcal{G}(x,n) \coloneq x(1-x).
\end{equation}
Grouping terms such it takes the form $ \mathcal{h}_{1}(x,\vec{\alpha}) = \vec{A} \cdot \vec{\alpha} \; x^2 + \vec{B} \cdot \vec{\alpha} \; x  + \vec{C} \cdot \vec{\alpha},$
we obtain the following expressions 
\begin{align}
    \vec{A} &= 4\vec{e}_{3,F} - \vec{e}_{3,G}, \nonumber\\
    \vec{B} &= - \vec{A} - \frac{2}{n-1} \vec{e}_2, \nonumber \\
    \vec{C} &= (1 -\frac{1}{(n-1)}) \vec{e}_{3,F} + \frac{1}{n-1} \vec{e}_2 + \frac{1}{(n-1)^2} \vec{e}_1.
\end{align}
Let us denote by $\vec{C}_1$ the leading order term of $\vec{C}$ in $\frac{1}{n-1}$ and similarly for $\vec{A}$ and $\vec{B}$, i.e. $\vec{C}_1 = \vec{e}_{3,F}$, $\vec{A}_1 = \vec{A}$ and $\vec{B}_1 = - \vec{A}$.
Plugging the expression for $\vec{A}, \vec{B}, \vec{C}$ into Eq.~\eqref{eq:QForm} of the main text, we obtain the following decomposition
\begin{equation}
\mathbf{Q}
=
\mathbf{Q}_1
+
\frac{1}{n-1}\mathbf{Q}_2
+
\frac{1}{(n-1)^2}\mathbf{Q}_3,
\end{equation}
with
\begin{align}
\mathbf{Q}_1 &= \vec{A}_1\vec{A}_1^{\top} - 2\left(\vec{A}_1\vec{C}_1^{\top} + \vec{C}_1\vec{A}_1^{\top} \right), \nonumber\\
\mathbf{Q}_2 &= 2\left( \vec{A}_1\vec{C}_1^{\top} + \vec{C}_1\vec{A}_1^{\top} \right), \nonumber\\
\mathbf{Q}_3 &= 4\vec{e}_2\vec{e}_2^{\top} - 2\left( \vec{A}_1\vec{e}_1^{\top} + \vec{e}_1\vec{A}_1^{\top} \right).
\end{align}

\section{Coefficient vectors of the three-body off-diagonal}\label{app:ThreeBodyHPlanes}

The first off-diagonal of the three-body PI Bell operator takes the form $\mathcal{h}_1(x,\vec{\alpha}) = x^{2}\,\vec{A}\cdot\vec{\alpha} + x\,\vec{B}\cdot\vec{\alpha} + \vec{C}\cdot\vec{\alpha}$ of Eq.~\eqref{eq:ThreeBodyFirstDiag}. After the substitution $k\to (n-1)x$, division by $(n-1)^{2}$, and the limit $n\to\infty$, the single- and two-body contributions vanish, being respectively constant and linear in $k,n$, so that only the three-body coefficients $(\alpha_{000},\alpha_{001},\alpha_{011},\alpha_{111})$ survive. The nonvanishing components of $\vec{A}$ and $\vec{C}$ are
\begin{align}
A_{000} &= \tfrac{3}{2}\left(3+5\cos 2\varphi\right)\sin\varphi, & C_{000} &= 3\sin\varphi\cos^{2}\varphi, \nonumber\\
A_{001} &= \tfrac{1}{2}\sin\theta - \tfrac{1}{4}\sin(\theta-2\varphi) + \tfrac{15}{4}\sin(\theta+2\varphi), & C_{001} &= \tfrac{1}{2}\cos\varphi\left[3\sin(\theta+\varphi)-\sin(\theta-\varphi)\right], \nonumber\\
A_{011} &= \tfrac{1}{2}\sin\varphi + \tfrac{1}{4}\sin(2\theta-\varphi) + \tfrac{15}{4}\sin(2\theta+\varphi), & C_{011} &= \tfrac{1}{2}\cos\theta\left[3\sin(\theta+\varphi)+\sin(\theta-\varphi)\right], \nonumber\\
A_{111} &= \tfrac{3}{2}\left(3+5\cos 2\theta\right)\sin\theta, & C_{111} &= 3\sin\theta\cos^{2}\theta,
\end{align}
and $\vec{B}=-\vec{A}$ at this order, in agreement with Theorem~\ref{thm:rank2}.

\section{Eigendecomposition of \texorpdfstring{$\mathbf{Q}_1$}{Q1}}\label{app:Eigvec}

The eigenvalues and eigenvectors of $\mathbf{Q}_1$ can be obtained in closed form as follows.
Let $\vec{q}_1, \vec{q}_2$ denote the columns of $\mathbf{Q}_1$ associated with the coefficients $\alpha_{000}$ and $\alpha_{011}$, let $\Phi = (\vec{q}_1\ \vec{q}_2)$ be the $9\times 2$ matrix formed by these columns, and let $V = \operatorname{span}\{\vec{q}_1, \vec{q}_2\}$. Because $\mathbf{Q}_1=\Psi\mathbf{M}\Psi^{\top}$ is symmetric of rank $2$ for $\theta\not\equiv\varphi\pmod{\pi}$(Theorem~\ref{thm:rank2}), its eigenvectors with nonzero eigenvalue span its range. This range is two-dimensional and, since $\vec{q}_1$ and $\vec{q}_2$ are linearly independent (which holds at the settings considered here), it equals $V$. Every nonzero-eigenvalue eigenvector therefore lies in $V$, i.e.\ there exist $\mathbf{u}\in\mathbbm{R}^2$ and $\lambda\neq0$ such that $\mathbf{Q}_1(\Phi\mathbf{u}) = \lambda\,\Phi\mathbf{u}$. Left-multiplying both sides by $\Phi^\top$ gives
\begin{equation}
    \Phi^\top \mathbf{Q}_1\Phi\,\mathbf{u} = \lambda\,\Phi^\top\Phi\,\mathbf{u}.
\end{equation}
Since $\vec{q}_1$ and $\vec{q}_2$ are linearly independent, the Gram matrix $\mathcal{G} = \Phi^\top\Phi$ is invertible, and this reduces to the standard $2\times 2$ eigenvalue problem $\mathbf{N}\mathbf{u} = \lambda\mathbf{u}$, where $\mathbf{N} = \mathcal{G}^{-1}\Phi^\top \mathbf{Q}_1 \Phi$. The eigenvalues are obtained in closed form by solving the characteristic polynomial $\lambda^2 - \tau \lambda + \delta$, where $\tau = \operatorname{Tr}(\mathbf{N})$ and $\delta = \det(\mathbf{N})$, which has solutions
\begin{equation}
    \lambda_\pm = \frac{\tau \pm \sqrt{\tau^2 - 4\delta}}{2}.
\end{equation}
For each eigenvalue, the corresponding $\mathbf{u}_\pm$ is obtained from $\ker(\mathbf{N} - \lambda_\pm\mathbbm{I})$, and the eigenvector of $\mathbf{Q}_1$ in the original space is recovered as $\mathbf{v}_\pm = \Phi\,\mathbf{u}_\pm$.
The explicit coordinates of $\mathbf{u}_\pm$ are recorded in App.~\ref{app:Q1diag}.

\section{Explicit eigenvector coordinates of \texorpdfstring{$\mathbf{Q}_1$}{Q1}}\label{app:Q1diag}

In the notation of App.~\ref{app:Eigvec}, we record the closed-form coordinates of the eigenvectors $\mathbf{v}_\pm=u_{1}^{\pm}\,\vec{q}_1+u_{2}^{\pm}\,\vec{q}_2$ of $\mathbf{Q}_1$. We introduce the following auxiliary quantities
\begin{align*}
\mathcal{F} = &6 \sqrt{2} (3 \sin (\theta ) \cos (\phi )+\cos (\theta ) \sin (\phi )) \sqrt{\mathcal{F}_1}
\end{align*}
where
\begin{align}
    \mathcal{F}_1 = &(\cos (2 \theta )+\cos (2 \phi )-2) \nonumber\\
    &(-4 \cos (2 \theta )+\cos (4 \theta )-4 \cos (2 \phi )+\cos (4 \phi )+6) \nonumber\\
    &(2 \cos (4 \theta ) (113 \cos (2 \phi )+60)+\cos (2 \theta ) (112 \cos (2 \phi )+226 \cos (4 \phi )-163) -64 \sin (2 \theta ) \sin (2 \phi ) \nonumber\\
    &(7 \cos (2 \theta )+7 \cos (2 \phi )+2)+225 \cos (6 \theta )-163 \cos (2 \phi )+120 \cos (4 \phi )+225 \cos (6 \phi )-928)
\end{align}
and
\begin{align}
\mathcal{R} = &3 \Bigl(\sin (\theta ) (-298 \cos (2 \theta )+310 \cos (4 \theta )-90 \cos (6 \theta )+141) 
\cos (\phi )\nonumber\\ 
&+\sin (\theta ) (98 \cos (2 \theta )-58 \cos (4 \theta )-67) 
\cos (3 \phi )\nonumber\\
&+45 \sin (\theta ) \cos (7 \phi )+(\sin (3 \theta )-84 \sin (\theta )) 
\cos (5 \phi )\nonumber\\ &-(50 \cos (\theta )+148 \cos (3 \theta )-228 \cos (5 \theta )+75 \cos (7 \theta )) 
\sin (\phi )\nonumber\\ &+(40 \cos (\theta )+66 \cos (3 \theta )-31 \cos (5 \theta ))
\sin (3 \phi )\nonumber\\ 
&-(56 \cos (\theta )+\cos (3 \theta )) \sin (5 \phi ) \nonumber\\
 &+15 \cos (\theta ) \sin (7 \phi )-1536 \sin ^6(\theta ) \cos (\theta ) \csc (\phi )\Bigr)
\end{align}
We then have that $u_1^+ = \frac{\mathcal{R} + \mathcal{F}}{2} $ and $u_1^- = \frac{\mathcal{R} - \mathcal{F}}{2}$.
\begin{align}
u_2^+ = u_2^- =& 3 (191 \sin (\theta -\phi )-43 \sin (3 (\theta -\phi )) \nonumber\\
&-96 \sin (3 \theta -\phi )+45 \sin (5 \theta -\phi )+387 \sin (\theta +\phi ) \nonumber\\
&-149 \sin (3 (\theta +\phi ))-109 \sin (3 \theta +\phi )-18 \sin (5 \theta +\phi ) \nonumber\\
&+133 \sin (\theta +3 \phi )+45 \sin (5 \theta +3 \phi )-78 \sin (\theta +5 \phi ) \nonumber\\
&+30 \sin (3 \theta +5 \phi )+15 \sin (\theta +7 \phi )+104 \sin (\theta -3 \phi )-39 \sin (\theta -5 \phi ) \nonumber\\
&+7 \sin (3 \theta -5 \phi )+7 \sin (\theta -7 \phi ))
\end{align}

\section{Coefficient matching for the semidefinite certificate}\label{app:sdpgram}

This appendix records the linear coefficient-matching map underlying Lemma~\ref{thm:sos} and carries out the three-body ($K-d=2$) case explicitly, recovering the quadratic form of Theorem~\ref{thm:QuadraticSurface}. In this case, the explicit form of Lemma \ref{thm:sos} reads
\begin{equation}
p_{2}x^{2}+p_{1}x+p_{0}= \begin{pmatrix}
1 & x
\end{pmatrix}\begin{pmatrix}
 g_{00}&g_{01}\\
 g_{01}&g_{11}
\end{pmatrix}
\begin{pmatrix}
1\\
x
\end{pmatrix}
+x(1-x)g,
\end{equation}
with the constraints $G_{0}=\left(\begin{smallmatrix}g_{00}&g_{01}\\ g_{01}&g_{11}\end{smallmatrix}\right)\succeq0,$ $g\geq 0$. We have additional constraints on the coefficients $g$ from the fact that, for a given $i$, the prefactors of $x^{i}$ have to match on the left and right sides of the equation
\begin{equation}
g_{00}=p_0,\qquad 2g_{01}+g=p_1,\qquad g_{11}-g=p_2 .
\end{equation}
Then, from the positive semi-definiteness of $G_{0}$, we get $g_{00}=p_0\ge0$, $g_{11}=p_2+g\geq 0$ and $\det G_0=p_0(p_2+g)-\frac{1}{4}(p_1-g)^{2}\geq 0$. Notice that the existence of $g$ such that $\det G_0\geq 0$ enforces the non-negative discriminant $p_{0}(p_{0}+p_{1}+p_{2})\geq 0$. Since $p_{0}\geq 0$ we get $p_{0}+p_{1}+p_{2}\geq 0$. These two inequalities recover the end-point inequalities $p(0),p(1)\geq 0$.

This, however, is not enough to ensure that the feasibility of the decomposition into Gram matrices is satisfied. If both of the roots of the polynomial $\det G_0=p_0(p_2+g)-\frac{1}{4}(p_1-g)^{2}$ are strictly less than $0$ this clashes with the condition $g\geq 0$. Therefore, the largest root of the polynomial has to be non-negative, a sufficient condition for which reads $p_{1}^{2}-4p_{0}p_{2}\leq 0$. Notice that this is the same condition that we get if we demand that $p_{2}x^{2}+p_{1}x+p_{0}$ has no unique roots, which is the only way to satisfy the constraints arising from Theorem \ref{thm:QuadraticSurface}.

In the notation of Eq.~\eqref{eq:ThreeBodyFirstDiag}, $p_2=-\vec{A}\cdot\vec{\alpha}$, $p_1=-\vec{B}\cdot\vec{\alpha}$ and $p_0=-\vec{C}\cdot\vec{\alpha}$, which gives us
\begin{equation}
p_1^{2}-4p_0p_2\le0
\quad\Longleftrightarrow\quad
(\vec{B}\cdot\vec{\alpha})^{2}-4(\vec{A}\cdot\vec{\alpha})(\vec{C}\cdot\vec{\alpha})\le0,
\end{equation}
which is $\vec{\alpha}^{\top}\mathbf{Q}\vec{\alpha}\le0$ of Theorem~\ref{thm:QuadraticSurface}.
The projection onto the Bell coefficients of the lifted spectrahedron defined by $G_0\succeq0,\,g\ge0$ thus has this three-body conic as its boundary. 

\section{Canonical form for \texorpdfstring{$K=3$}{K=3} and necessity of the lifted variable}\label{app:lift}

For $K=3$ the description of Sec.~\ref{sec:sdp} assembles into a single explicit linear matrix inequality containing one auxiliary scalar, and this scalar cannot be eliminated: the three-body cone is a spectrahedral shadow but provably not a  \emph{spectrahedron}, i.e., it is not a set of the form $\{u\in\mathbbm{R}^{N}:\mathcal{A}(u)\succeq0\}$ with $\mathcal{A}$ an affine map into symmetric matrices.

\subsection{Canonical form}\label{app:lift:canonical}

In the notation of  Eq.~\eqref{eq:ThreeBodyFirstDiag}, write $p_0=-\vec{C}\cdot\vec{\alpha}$, $p_1=-\vec{B}\cdot\vec{\alpha}$, $p_2=-\vec{A}\cdot\vec{\alpha}$ for the coefficients of $p=-\mathcal{h}_1$. Combining the $d=1$ certificate of Eq.~\eqref{eq:sosrep} with the linear conditions of the second and third off-diagonals, membership of $\vec{\alpha}$ in the three-body cone is the feasibility of the block-diagonal LMI
\begin{equation}\label{eq:canonicalLMI}
F(\vec{\alpha},g)=
\begin{pmatrix} p_0 & \tfrac{1}{2}(p_1-g)\\[2pt] \tfrac{1}{2}(p_1-g) & p_2+g\end{pmatrix}
\oplus\,[\,g\,]\,\oplus\,[-\mathcal{h}_2(0,\vec{\alpha})]\,\oplus\,[-\mathcal{h}_2(1,\vec{\alpha})]\,\oplus\,[-\mathcal{h}_3(\vec{\alpha})]\;\succeq\;0,
\end{equation}
a single $6\times6$ linear matrix inequality, linear and homogeneous in the ten variables $(\vec{\alpha},g)$. The Gram parametrization of Eq.~\eqref{eq:sosrep} has four entries subject to the three matching equations of App.~\ref{app:sdpgram}, leaving exactly one free lifted variable: the weight $g$.

The lifted variable can be eliminated in closed form, at the price of leaving the LMI format. The determinant of the $2\times2$ block of Eq.~\eqref{eq:canonicalLMI} is a concave quadratic in $g$ whose discriminant collapses to $4\,p(0)p(1)$, so the block is positive semidefinite exactly for $g\in[g_-,g_+]$ with
\begin{equation}\label{eq:gfiber}
g_{\pm}=(2p_0+p_1)\pm2\sqrt{p(0)\,p(1)},
\end{equation}
and $\vec{\alpha}$ belongs to the three-body cone if and only if
\begin{equation}\label{eq:solverfree}
\mathcal{h}_2(0,\vec{\alpha})\le0,\quad \mathcal{h}_2(1,\vec{\alpha})\le0,\quad \mathcal{h}_3(\vec{\alpha})\le0,\quad
p(0)\ge0,\quad p(1)\ge0,\quad 2p_0+p_1+2\sqrt{p(0)\,p(1)}\ge0,
\end{equation}
a solver-free membership test.
The semidefinite form matters for optimization (Sec.~\ref{sec:sdp:opt}) and for the structural statement below, and the square root in Eq.~\eqref{eq:solverfree} is the trace left behind by the projection. The boundary strata of the cone are the degenerations of the fiber $[\max(0,g_-),\,g_+]$ to a single point.
On the endpoint facets $p(0)=0$ or $p(1)=0$ the certificate is unique, $g=2p_0+p_1$, while on the interior conic of Theorem~\ref{thm:QuadraticSurface}, where $p$ has a double root in $(0,1)$, the fiber is pinned at $g=0$: a pure square, with no interval weight.

\subsection{Necessity of the lift}\label{app:lift:necessity}

Let
\begin{equation}
P_2:=\{(p_0,p_1,p_2)\in\mathbbm{R}^{3} : p_0+p_1x+p_2x^{2}\ge0\ \text{ for all }x\in[0,1]\},
\end{equation}
the closed, pointed, full-dimensional cone of quadratics nonnegative on the interval. The first off-diagonal condition is its preimage under $\vec{\alpha}\mapsto(p_0,p_1,p_2)$. We freely identify a quadratic with its coefficient vector, e.g.\ $(1-x)^{2}\equiv(1,-2,1)$ and $x(1-x)\equiv(0,1,-1)$ for $0\neq p\in P_2$.

\begin{theorem}\label{thm:notspectrahedron}
$P_2$ is a projected spectrahedron (Lemma~\ref{thm:sos}), but it is not itself a spectrahedron.
\end{theorem}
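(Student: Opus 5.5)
We argue by contradiction using a necessary condition for a cone to be a spectrahedron: \emph{every face of a spectrahedron is exposed} (Ramana--Goldman). Write $\mathcal{A}(u)=A_0+\sum_i u_iA_i$. For any face $F$ of $\{u:\mathcal{A}(u)\succeq0\}$, pick $u_0$ in the relative interior of $F$ and let $\Pi$ be the projector onto $\ker\mathcal{A}(u_0)^{\perp}$ complement, i.e.\ onto $\ker\mathcal{A}(u_0)$. Then $F=\{u:\mathcal{A}(u)\succeq0,\ \operatorname{Tr}(\Pi\mathcal{A}(u))=0\}$, so the affine functional $u\mapsto\operatorname{Tr}(\Pi\mathcal{A}(u))$, which is nonnegative on the set, exposes $F$. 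We may quote this fact. That $P_2$ is a projected spectrahedron is exactly Lemma~\ref{thm:sos} with $K-d=2$, as made explicit in App.~\ref{app:sdpgram}. It therefore suffices to exhibit a face of $P_2$ that is not exposed.

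\textbf{The dual cone.} First we identify the dual cone $P_2^{*}$ of linear functionals nonnegative on $P_2$. We claim
\[
P_2^{*}=\operatorname{cone}\{(1,t,t^{2}):t\in[0,1]\},
\]
where the functional $(1,t,t^2)$ acts by point evaluation $p\mapsto p(t)$. The right-hand side is closed, because it is the conic hull of a compact set that avoids the origin. Its dual is $P_2$ by the very definition of $P_2$. The bipolar theorem then gives the claim.

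\textbf{The candidate face.} Take the ray $R=\mathbbm{R}_{\ge0}\,x^{2}$. It is an extreme ray, hence a face: if $x^2=q_1+q_2$ with $q_i\in P_2$, then each $q_i$ vanishes at $0$ to second order, because $q_i\ge0$ and $q_1(0)+q_2(0)=0$, and likewise $q_1'(0)+q_2'(0)=0$. This forces $q_i\propto x^{2}$.

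\textbf{Non-exposedness.} Suppose $\ell\in P_2^{*}$ exposes $R$. Write $\ell=\sum_j c_j\,\mathrm{ev}_{t_j}$ with $c_j>0$ and $t_j\in[0,1]$; by Carathéodory, finitely many terms suffice. The condition $\ell(x^{2})=\sum_j c_jt_j^{2}=0$ forces every $t_j=0$, so $\ell=c\,\mathrm{ev}_0$. But $\mathrm{ev}_0$ also vanishes on $x(1-x)\in P_2$, which is not a multiple of $x^2$. Hence the face exposed by $\ell$ is at least $\operatorname{cone}\{x^{2},x(1-x)\}\supsetneq R$, a contradiction. So $R$ is a non-exposed face, and $P_2$ is not a spectrahedron.

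\textbf{Main obstacle.} The delicate step is the description of $P_2^{*}$. One must make sure that no boundary functional other than a positive combination of point evaluations (for instance a limiting direction such as $(0,0,1)$) can vanish on $x^2$. The compactness argument settles this: $(0,0,1)$ does not lie in the cone, since every generator has first coordinate $1$.

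Finally, the result transfers to the Bell-coefficient setting. Because the map $\vec\alpha\mapsto(p_0,p_1,p_2)$ is surjective for generic settings, and a preimage of a spectrahedron is a spectrahedron, this shows that the lifted variable $g$ in Eq.~\eqref{eq:canonicalLMI} cannot be removed.
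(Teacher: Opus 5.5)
Your proof is correct. Its backbone matches the paper's: the Ramana--Goldman fact that spectrahedra have only exposed faces (the paper invokes it for extreme points), applied to a square with a double root at an endpoint of $[0,1]$. Your $x^{2}$ and the paper's $(1-x)^{2}$ are exchanged by $x\mapsto 1-x$.

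The difference lies in how non-exposedness is certified. The paper passes to the affine slice $B=P_2\cap\{p_0=1\}$ and finds the supporting affine function by hand. That function must attain its minimum at $t=1$ along the curve $(1-tx)^{2}\in B$. This first-order condition, together with $\chi(q)=0$, forces $\chi\propto 1+p_1+p_2=p(1)$, which also vanishes at $(1-x)(1-x/2)$. You instead stay in the cone and compute all of $P_2^{*}$ via the bipolar theorem, as the cone of point evaluations $\mathrm{ev}_t$ with $t\in[0,1]$. The only nonnegative functionals vanishing on $x^{2}$ are then multiples of $\mathrm{ev}_0$, whose zero set also contains $x(1-x)$. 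Your route is more conceptual and describes every supporting functional, hence every exposed face, of $P_2$ at once. The paper's route is more elementary, using only a one-variable calculus argument and no duality theory.

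Two small points deserve a line each.
\begin{itemize}
\item Closedness of $\operatorname{cone}\{(1,t,t^{2}):t\in[0,1]\}$ needs the \emph{convex hull} of the generators to avoid the origin, not merely the generators themselves. This holds here because every generator has first coordinate $1$, as you note.
\item The exposing functional supplied by Ramana--Goldman is a priori affine. Its constant term vanishes because $0\in R$, so it does lie in $P_2^{*}$.
\end{itemize}

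Your closing transfer to $\vec{\alpha}$ goes beyond the statement, and it holds only at finite $n$. In the strict limit $\vec{B}=-\vec{A}$, so the coefficient map is not onto and the preimage of $P_2$ is polyhedral.
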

\begin{proof}
We prove the above by contradiction: i.e., we assume that $P_{2}$ is a spectrahedron and we will show that its implications contradict a theorem by Ramana and Goldman~\cite{ramana_geometric_1995} which states that every extreme point of a spectrahedron is exposed.

First, let us consider the following set
\begin{equation}
B=P_{2}\cap \left\{(p_{0},p_{1},p_{2})\in \mathbb{R}^{3}: p_{0}=1\right\}.
\end{equation}
By construction, $B$ is an affine slice of $P_{2}$, and so if $P_{2}$ is spectrahedron, so is $B$. 

Next, let us consider a point $q\coloneqq  (1,-2,1)\in B$ which is associated with the polynomial $q(x)=(1-x)^{2}$. We will now show by contradiction that this point is extreme in $B$. Let $q(x)=q_{1}(x)+q_{2}(x)$ for some $q_{1}(x),q_{2}(x)\in B$. Notice that $q(1)=0$ implies that each $q_{1}(1),q_{2}(1)=0$ since $q_{1}(x),q_{2}(x)\geqslant 0$ for $x\in [0,1]$. Next, observe that for any function $f(x)\in B$ satisfying $f(1)=0$ also obeys $f'(1)\leqslant 0$, since $f(x)$ is nonnegative on $[0,1]$ and $f'(1)\geqslant 0$ would imply that $f(1-\epsilon)<0$ for some small $\epsilon>0$. Then, $q'(1)=0$ gives $q_{1}'(1)+q_{2}'(1)=0$, which by the virtue of the above argument, implies $q_{1}'(1),q_{2}'(1)=0$. The only quadratic with a double root at $1$ belonging to $B$ is $(1-x)^{2}$, and so $q_{1}(x),q_{2}(x)=q(x)$, i.e., $q(x)$ is an extreme point in $B$.

This then by Ramana-Goldman Theorem~\cite{ramana_geometric_1995} implies that $q(x)$ is exposed in $B$: there exists a linear function $h : \mathbb{R}^{3}\rightarrow \mathbb{R}$, such that $h(b)\geqslant0$ for all $b\in B$, with $h(b)=0$ implying $b=q(x)$. Such a function can always be written as
\begin{equation}
\chi(p_{1},p_{2})=a_{0}+a_{1}p_{1}+a_{2}p_{2},
\end{equation}
for some $a_{0},a_{1},a_{2}\in \mathbb{R}$. To determine the coefficients of the function $\chi$ we consider the family of points $q_{t}=(1,-2t,t^{2})$, with $q_{1}=q$. Notice, that for all $t\in \mathbb{R}$, $q_{t}\in B$ since it can be associated with a square polynomial $q_{t}(x)=(1-tx)^{2}$. Then, let $f(t)$ be a function defined as
\begin{equation}
f(t)\coloneqq \chi (-2t,t^{2})=a_{0}-2 t a_{1}+t^{2} a_{2}.
\end{equation}
Since we demand that $\chi(\cdot)$ achieves it's minimum value $0$ on the set $B$ on when evaluated on $q$, we can infer that the function $f(t)$ achieves its minimum at $t=1$
\begin{equation}
f'(1)=-2 a_{1}+ 2a_{2}=0,
\end{equation}
and so $a_{1}=a_{2}$. Then, since $\chi(q)=0$ we also get
\begin{equation}
f(1)=a_{0}-2 a_{1}+ a_{2} = 0,\quad \Rightarrow\quad a_{0}=a_{1}.
\end{equation}
Therefore, the function $\chi$ has a form
\begin{equation}
\chi(p_{1},p_{2})= a_{0}(1+p_{1}+p_{2}).
\end{equation}
Clearly, $s=(1,-3/2,1/2)\in B$ and $\chi(s)=0$, and so $q$ is not a unique root of $\chi$. Therefore, $q$ is extreme but not exposed in $B$, hence by the Ramana-Goldman Theorem~\cite{ramana_geometric_1995} $B$ is not a spectrahedron, hence, $P_{2}$ is also not a spectrahedron.
\end{proof}

\section{Endpoint-bounded subregion via Sturm's theorem}\label{app:boundary}

The semidefinite characterization of Sec.~\ref{sec:sdp} decides stoquasticity for every $\vec{\alpha}$ and every order $K$. For the subregion $\partial'\mathbbm{S}$ on which the supremum of $\mathcal{h}_{d}(x,\vec{\alpha})$ over $[0,1]$ is attained at an endpoint, so that the boundary is purely linear, Sturm's theorem gives an explicit local description of that boundary: around any generic $\vec{\alpha}$, one at which the polynomial signs entering the test do not vanish, the membership verdict is constant on an open ball, so the boundary is the locus where such a sign crosses zero. In particular, this certifies membership there without a semidefinite solver. The supremum lies at an endpoint precisely when the system
\begin{equation}
\label{eq:HplaneIneqSys}
\begin{aligned}
f(x,\vec{\alpha}) &\coloneqq \partial_x \mathcal{h}_{d}(x,\vec{\alpha}) = 0,\\
g_1(x,\vec{\alpha}) &\coloneqq \partial_x^2 \mathcal{h}_{d}(x,\vec{\alpha}) < 0,\\
g_2(x,\vec{\alpha}) &\coloneqq \mathcal{h}_{d}(x,\vec{\alpha}) - \mathcal{h}_{d}(0,\vec{\alpha}) > 0,\\
g_3(x,\vec{\alpha}) &\coloneqq \mathcal{h}_{d}(x,\vec{\alpha}) - \mathcal{h}_{d}(1,\vec{\alpha}) > 0
\end{aligned}
\end{equation}
has no solution in $(0,1)$. Since $\mathcal{h}_{d}(x,\vec{\alpha})$ is a polynomial in $x$ with coefficients linear in $\vec{\alpha}$ (Theorem~\ref{thm:BellMatElPol}), its interior critical points are the real roots of $f=\partial_x \mathcal{h}_{d}$, which Sturm's theorem counts exactly on any interval. Isolating each of the at most $K-d-1$ roots by bisection and testing the remaining inequalities, using $\gcd(f,g_j)$ to detect a common root of $f$ and $g_j$, decides infeasibility. Because the coefficients of $\mathcal{h}_{d}$ vary continuously with $\vec{\alpha}$, the verdict is constant until a critical point reaches an endpoint or some $g_j$ vanishes at a root.
The subregion $\partial'\mathbbm{S}$ is therefore traced by following $\vec{\alpha}$ between such events.

\end{document}